\documentclass{iopart}
\usepackage     [utf8x]                 {inputenc}
\usepackage     [T1]                    {fontenc}
\usepackage                             {color}
\usepackage                             {amsmath,amsthm, amssymb}
\usepackage                             {bbold}
\usepackage                             {graphicx}
\usepackage     [caption=false]         {subfig}
\usepackage     [english]               {babel}
\usepackage                             {cite}
\usepackage                             {hyperref}

% links
\definecolor{darkblue}{rgb}{0.0,0.0,0.5}
\hypersetup{
    colorlinks,
    linkcolor=darkblue,
    citecolor=darkblue,
    urlcolor=darkblue,
    pdftitle={Bose-Hubbard model on two-dimensional line graphs}}

%opening

\newtheorem*{theorem}{Theorem}
\newtheorem{prop}{Proposition}
\newtheorem*{cor}{Corollary}
\newtheorem*{defin}{Definition}

\begin{document}

\title{Bose-Hubbard model on two-dimensional line graphs}
\author{Johannes Motruk, Andreas Mielke}
\address{Institut für Theoretische Physik, Ruprecht-Karls-Universität Heidelberg, Philosophenweg 19, D-69120 Heidelberg, Germany}
\ead{j.motruk@thphys.uni-heidelberg.de, mielke@tphys.uni-heidelberg.de}
\date{\today}

\begin{abstract}
We construct a basis for the many-particle ground states of the positive hopping Bose-Hubbard model on line graphs of finite 2-connected planar bipartite graphs at sufficiently
low filling factors. The particles in these states are localized on non-intersecting vertex-disjoint cycles of the line graph which correspond to non-intersecting edge-disjoint cycles
of the original graph. The construction works up to a critical filling factor at which the cycles are close-packed. 
\end{abstract}

\pacs{37.10.Jk, 75.10.Jm, 75.50.Ee}

\submitto{\JPA}

\maketitle

\section{Introduction}

The Hubbard model on lattices that are line graphs exhibits a flat band in the single-particle spectrum \cite{0305-4470-24-14-018}. We consider the case of positive hopping matrix elements, the
flat band is then the lowest band and gives rise to a large degeneracy of single-particle ground states. The space of these ground states is spanned by localized states. In this work
we investigate this model for bosons.

The problem is closely related to spin systems with antiferromagnetic exchange interactions on frustrated lattice
geometries \cite{0305-4470-39-34-006,PhysRevB.79.054403,PhysRevB.70.104415,PhysRevLett.88.167207,PhysRevB.70.100403,PTPS.160.361}. If these systems are under the influence of
a strong external magnetic field which nearly fully polarizes the spins, single spin-flips, so called magnons behave like hopping bosons on a lattice. The corresponding single-particle
basis consists of localized magnon states \cite{PhysRevLett.88.167207}.
Exact many-particle states may be constructed by placing these localized states on the lattice such that they do not
overlap. Schmidt \emph{et al.} investigated the linear independence of the many-particle states obtained in this way for several lattice geometries containing
two-dimensional line graphs like the kagome or checkerboard lattice \cite{0305-4470-39-34-006}. Their linear independence was proved but it became as well obvious that they do not span the
whole space of many-magnon ground states. Additional states were briefly discussed in other works \cite{PTPS.160.361,PhysRevB.70.100403} and numerical evidence for their contribution to
the ground state degeneracy was given \cite{0305-4470-39-34-006,10.1063/1.2780166}.

We should note, however, that there is no one to one correspondence of our model and the spin models for all considered lattices. A mapping of the spin model onto the Bose-Hubbard model
using a Holstein-Primakoff transformation \cite{PhysRev.58.1098} generates nearest-neighbor interactions which we do not take into account in our model. Nevertheless, in some cases
like e.g. the kagome lattice, the ground states of our model are equivalent to the ones for the spin model.

These spin models are as well a possible experimental realization of our model. A material exhibiting spins on a kagome geometry was presented in Ref. \cite{PhysRevB.49.3975}.
Another direct implementation of the bosonic Hubbard model lies in the field of ultracold atoms in optical lattices \cite{PhysRevLett.81.3108,nature415039a}. Damski \emph{et al.} \cite{PhysRevA.72.053612}
show how an optical kagome lattice can be generated, Eckardt \emph{et al.} \cite{0295-5075-89-1-10010} give ideas how to reverse the sign of the hopping matrix elements.

The purpose of this work is to generalize existing findings for few special lattices to all two-dimensional lattices that are line graphs and to characterize the full
ground state manifold of the model. We tackle these questions by using some notions of graph theory which have proven useful in the investigation of flat-band ferromagnetism in
the Hubbard model on line graphs \cite{0305-4470-24-14-018,0305-4470-24-2-005,0305-4470-25-16-011}.

The present paper is organized as follows. In Sec. \ref{sec:model}, we recall some definitions from graph theory and introduce our Hamiltonian and the considered lattice geometries.
The main theorem characterizing the ground states is stated in Sec. \ref{sec:theorem}. In Sec. \ref{sec:proof}, we give the proof of the main theorem. We first prove the linear independence
of the states defined in Sec. \ref{sec:theorem} and then show that they span the whole ground state manifold.

\section{The Model}
\label{sec:model}

In this section, we give some definitions concerning graphs and their line graphs. Furthermore, we introduce the Hamiltonian of our model and the basis of the space of
single-particle ground states.

\subsection{Definitions from graph theory}

In order to define our model, we need several notions of graph theory \cite{bollobas1998}, which we briefly review here. Let \(G\) be a graph. \(V(G)\) is the set
of vertices and \(E(G)\) the set of edges. Each edge \(e \in E(G)\) is an unordered pair of vertices and may be denoted as \(e = \{x,y\}\), \(x,y \in V(G)\). The edge \(e\)
is said to join the vertices \(x\) and \(y\). A walk of length \(n-1\) is an alternating sequence \(w = (x_1,e_1, \ldots , x_{n-1}, e_{n-1},x_n)\) of vertices and edges where
\( e_i = \{ x_i,x_{i+1} \} \). A path is a self-avoiding walk, i.e. \(x_i \neq x_j \) for \(i \neq j \). A cycle is a closed, self-avoiding walk, 
i.e. \(x_i \neq x_j \) for \(i \neq j\), \(j < n \) and \(x_1 = x_n\).

A graph \(G\) is connected, if for every pair of vertices \(\{x,y\}\) of \(G\), there exists a path from \(x\) to \(y\). A graph is said to be $k$-connected, if it contains no set of $k-1$ edges so
that \(G\) decays into two unconnected subgraphs, if these edges are deleted. As already stated in the abstract, we require the class of graphs we consider to be 2-connected, which clearly implies $k$-connectedness with $k>2$. It will become clear later that we need 2-connectedness since then each edge \(e \in E(G)\) is part of a cycle.

A graph is bipartite, if its vertex set \(V\) is the union of two disjoint sets \(V_1\) and \(V_2\) so that
each edge joins a vertex of \(V_1\) to a vertex of \(V_2\). The length of each cycle in a bipartite graph is even. A planar graph is a graph that can be drawn in the plane
so that no edges intersect each other. This representation of a planar graph is called a plane graph. It is not necessarily unique. However, the results we obtain do not
depend on the representation. In the following we consider bipartite 2-connected plane graphs.

A plane graph divides the plane into a set of connected components, called faces. Each plane graph has exactly one unbounded face and several bounded faces. The boundary of a face
is a cycle. We write \( f \cap g  = \emptyset \), if the cycles around \(f\) and \(g\) have no edges in common. Let \(F(G)\) be the set
of bounded faces of \(G\). Euler's Theorem relates the number of faces, vertices and edges of a plane graph by \( |F(G)| = |E(G)| - |V(G)| +1 \). 
We now define the line graph \(L(G)\) of a graph \(G\).

\begin{defin}
The line graph \(L(G)\) of a graph \(G\) is a graph whose vertex set \(V_L(G)\) is the edge set \(E(G)\) of the original graph \(G\). Two vertices 
\(e, e' \in E(G) \) of \(L(G)\) are joined by an edge, if \( |e \cap e'| = 1 \), i.e. if the edges in the original graph \(G\) have a vertex in common.
\end{defin}
The line graph of a plane graph is not necessarily a plane graph itself. Two examples of common line graphs are given in Fig.~\ref{fig:linegraph}.

\begin{figure}[h]
\centering
\subfloat[Honeycomb lattice (plane graph) with kagome lattice (line graph).]{\label{fig:kagome}\includegraphics[width=7.5cm]{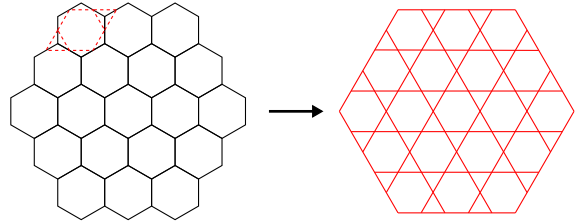}}
\\% \vspace{0.5cm}
\subfloat[Square lattice (plane graph) with checkerboard lattice (line graph).]{\label{fig:square}\includegraphics[width=7.5cm]{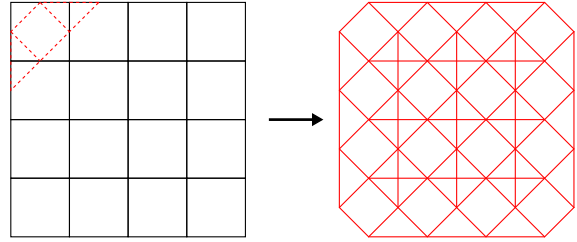}}
\caption[Lattices with their line graphs.]{\label{fig:linegraph} Examples of common plane graphs with their respective line graphs. The construction of the line graph is indicated in the upper left corners of the plane graphs.}
\end{figure}

Let \(B = (b_{xe})_{x \in V(G), e \in E(G)} \) be the incidence matrix of \(G\) with \(b_{xe} = 1 \), if \( x \in e \) and \(b_{xe} = 0 \) otherwise. The adjacency matrix
\(A = (a_{xy})_{x, y \in V(G)} \) is defined as follows: \(a_{xy} = 1 \), if \(x\) and \(y\) are adjacent, i.e. if the edge \(\{x,y\}\) exists and \(a_{xy} = 0 \) otherwise. The
adjacency matrix \(A_L\) of \(L(G)\) is related to the incidence matrix \(B\) of \(G\) by
\begin{equation}
\label{eq:adjacency}
 B^T B = 2 \, \mathbb{1}_{|E(G)|} + A_L,
\end{equation}
where \(\mathbb{1}_{|E(G)|}\) is the identity matrix of dimension \(|E(G)|\).

\subsection{The Hamiltonian}

We consider a bosonic Hubbard model on \(L(G)\) with the Hamiltonian
\begin{equation}
\label{eq:model}
 H = \sum_{e,e' \in E(G)} t_{ee'} b^\dagger_e b_{e'} + \sum_{e \in E(G)} U_e b^\dagger_e b^\dagger_e b_e b_e,
\end{equation}
where
\begin{equation}
 \label{eq:hopping}
 t_{ee'} = t \sum_{x \in V(G)} b_{xe} b_{xe'}.
\end{equation}
with \(b_{xe}/b_{xe'}\) being the elements of the incidence matrix \(B(G)\).
The operators \(b^\dagger_e\) and \(b_e\) are the usual creation and annihilation operators for bosons obeying the canonical commutation relations
\begin{equation}
 [b^\dagger_e,b^\dagger_{e'}] = [b_e,b_{e'}] = 0 \qquad \text{and} \qquad [b_e, b^\dagger_{e'}] = \delta_{ee'}.
\end{equation}
We choose \(U_e > 0 \) and \(t > 0\). Then the first part of the Hamiltonian describes the hopping of particles on \(L(G)\) whereas the second part is a repulsive on-site interaction.
\(t_{ee'}\) is the usual nearest-neighbor hopping plus an additional energy \(2t \delta_{ee'}\). %This can be seen from \eqref{eq:adjacency}.

% The Hamiltonian of the original bosonic Hubbard model reads
% \begin{equation}
%  H' = - \sum_{e,e' \in E(G)} t_{ee'} b^\dagger_e b_{e'} + \sum_{e \in E(G)} U_e b^\dagger_e b^\dagger_e b_e b_e
% \end{equation}
% with \(t_{ee'} > 0\). The kinetic energy has a negative sign. However, the positive sign in \(H\) is crucial for the construction of a single-particle basis. It assures that the flat band
% is the lowest energy band. In the case of fermions, \(H'\) could be transformed into \(H\) by a particle-hole transformation as shown in chapter \ref{chap:hubb}. Since the bosonic
% operators are commuting instead of
% anticommuting, this is not possible in the case of bosons. On a bipartite lattice, a local gauge transformation can be performed to change the sign of the kinetic energy.
% \begin{equation}
%   b^\dagger_x \rightarrow 
% \begin{cases}
%     a_1 b^\dagger_x & \text{if } x \in V_1 \\
%     a_2 b^\dagger_x & \text{if } x \in V_2 
% \end{cases}
% \end{equation}
% with \( a_1 a_2^\ast = -1 \). Unfortunately, line graphs of graphs which have vertices of degree higher than \(2\) always contain triangles and are therefore not bipartite. Thus, 
% \(H'\) cannot be transformed into \(H \). The experimental realization of our model would have to assure that the hopping matrix elements are positive.

\subsection{Single-particle basis}

Since \( B^T B \) is positive-semidefinite, the eigenvalues of \( \sum_{x \in V(G)} b_{xe} b_{xe'} \) are non-negative. It was shown by Mielke \cite{0305-4470-24-2-005}
that for bipartite 2-connected graphs,  \( B^T B \) has zero as an eigenvalue with multiplicity \( |E(G)|-|V(G)|+1 \). This is necessarily
the lowest eigenvalue. Each edge in \(E(G)\) may be oriented. With \(G\) being bipartite we may choose the orientation of each edge such that it points from \(V_1 (G)\)
to \(V_2 (G)\).
% If we introduce the matrix \( H = \diag(h_x) \) with \(h_x = 1\) if \(x \in V_1\) and \(h_x = -1\) if \(x \in V_2\), the matrix \(D = H B\) is
% called the oriented incidence matrix of \(G\) for this orientation. It has rank \( |V(G)|-1 \) (see e.g. \cite{biggs_alg_gr_th}). Its kernel has therefore the dimension
% \( |E(G)| - |V(G)| +1 \). Since \( B^T B =D^T D \) and \(\ker(X^T X) = \ker(X)\) for an arbitrary matrix \(X\), we obtain \(\ker(B) = \ker(D)\).
Furthermore, the boundary of each face in \(F(G)\) may be oriented. We choose a clockwise orientation. Now we may introduce the matrix \(S = (s)_{f \in F(G), e \in E(G)}\) as
\begin{equation}
 s_{fe} = \begin{cases}
           1,  & \text{if }e\text{ belongs to the boundary of }f \\ & \text{and points into the direction of }f, \\
           -1, & \text{if }e\text{ belongs to the boundary of }f \\ & \text{and points into the opposite direction of }f, \\
           0,  & \text{otherwise.}
          \end{cases}
\end{equation}
We have \(B S^T = 0\) and \(\dim\left(\ker(B)\right) = |F(G)| \). Hence, the columns of \(S^T\) form a basis of the kernel of \(B\). For each face \(f\) we define operators
\begin{equation}
\label{eq:face_op}
 b^\dagger_f = \sum_{e \in E(G)} s_{fe} b^\dagger_e.
\end{equation}
Then the states \( b^\dagger_f \left| 0 \right\rangle \) span the space of single-particle ground states with kinetic
energy zero. The particle is delocalized on the edges of the boundary cycle of the face \(f\) which are vertices of \(L(G)\). 
% The states can be normalized by the prefactor
% \( \left( \sum_e s_{fe}^2 \right)^{-1} \).
Let us examine the properties of cycles on \(G\) in this context. Since a cycle is self-avoiding, it encloses a simply connected part of the plane. This part is a union of faces.
If we denote the set of faces enclosed by a cycle \(c\) as \(F(c)\), we may see the following: The state \( \sum_{f \in F(c)} b^\dagger_f \left| 0 \right\rangle \) is a single-particle
ground state of \(H\) since it is a linear combination of basis states. The particle is delocalized only on the vertices of \(L(G)\) that are edges of \(c\). All other creation operators
contained in the different \( b^\dagger_f \) are canceled due to the alternating signs of the \(s_{fe}\) for adjacent faces.
In the following, we always deal with cycles on \(G\). The corresponding states and the model, however, are defined on the line graph \(L(G)\). This is illustrated in Fig.~\ref{fig:state}.

\begin{figure}[h]
 \centering
 \includegraphics[width=8.5cm]{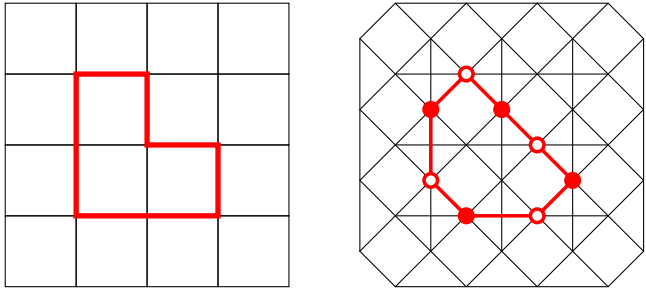}
 \caption{A cycle on a plane graph (left) and its corresponding single-particle ground state of \(H\) on the line graph (right). The alternating amplitudes caused by the factor \(s_{fe}\)
 in Eq.~\eqref{eq:face_op} are indicated by filled and empty circles.}
 \label{fig:state}
\end{figure}

\section{Main theorem}
\label{sec:theorem}

In this section, we state our main theorem. The local single-particle basis allows us to construct multi-particle states that have both kinetic and interaction
energy zero. We define a set of states that we identify as a basis of the Fock space \(\mathcal{F}_0\) of multi-particle ground states.

\subsection{Cycle sets}

Before we are able to state the theorem, we have to introduce the notion of cycle sets. A cycle set \(C = \{c_i, \ i = 1, \ldots ,N\}\) is a set of edge-disjoint cycles 
\(c_i = (x_{1i}, e_{1i}, \ldots , e_{n-1i} , x_{ni}), \ x_{ni}=x_{1i} \). The cycles in a cycle set may have vertices in common.
We now define the contraction of a cycle set. Any cycle \(c'\) with
\(F(c') \subset F(c)\) is called a contraction of \(c\). A cycle set \(C' = \{c'_i, \ i = 1, \ldots ,N\} \) is called a contraction of \(C\), if \( F(c'_i) \subseteq F(c_i) \) for all
\(i = 1, \ldots ,N\) and \(\bigcup_i F(c'_i) \subset \bigcup_i F(c_i) \). A cycle set is called contractible, if a contraction exists and uncontractible otherwise. Examples of contractions
of cycle sets are given in Fig.~\ref{fig:cont}.

\begin{figure}[h]
\centering
\subfloat[]{\label{fig:cont_hex}\includegraphics[height=5cm]{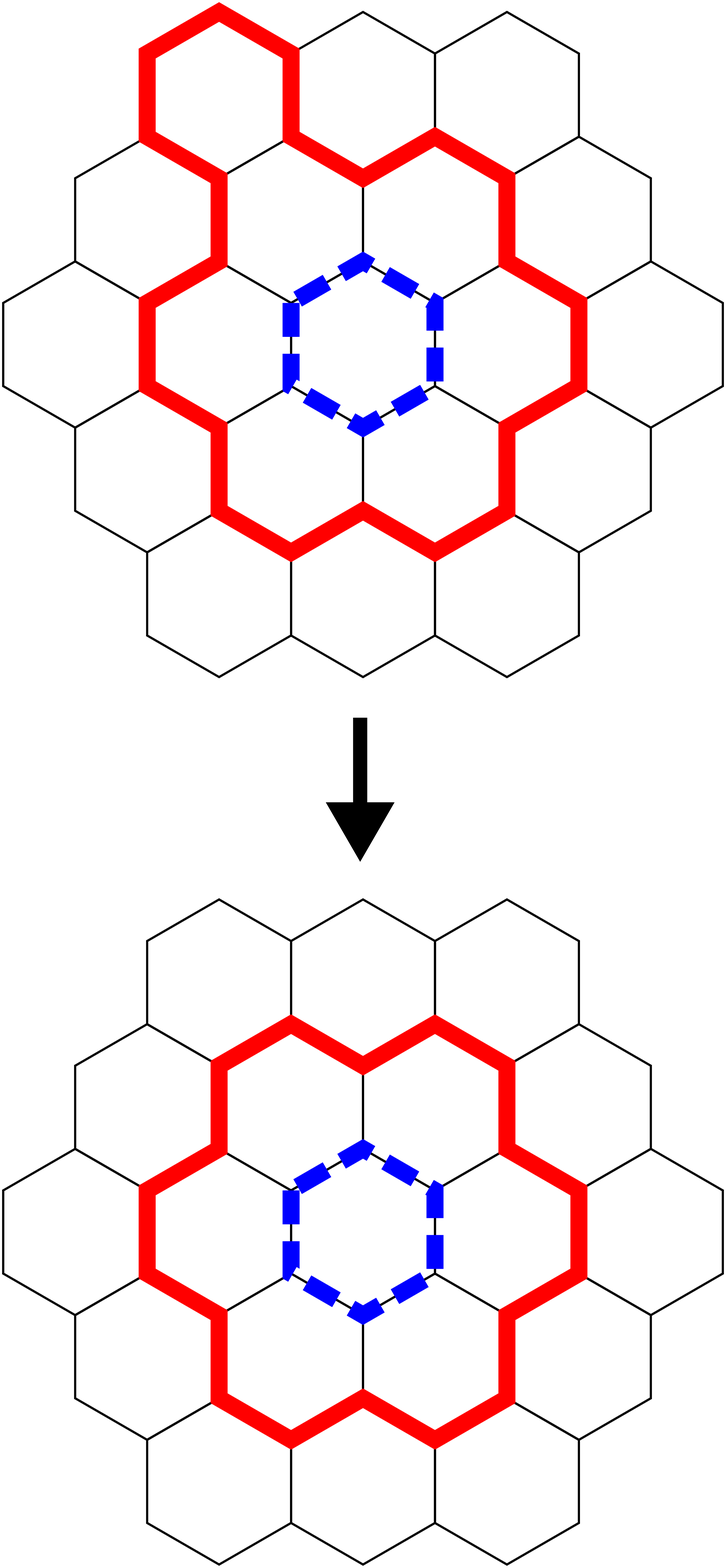}}
\hspace{2cm}
\subfloat[]{\label{fig:cont_square}\includegraphics[height=5cm]{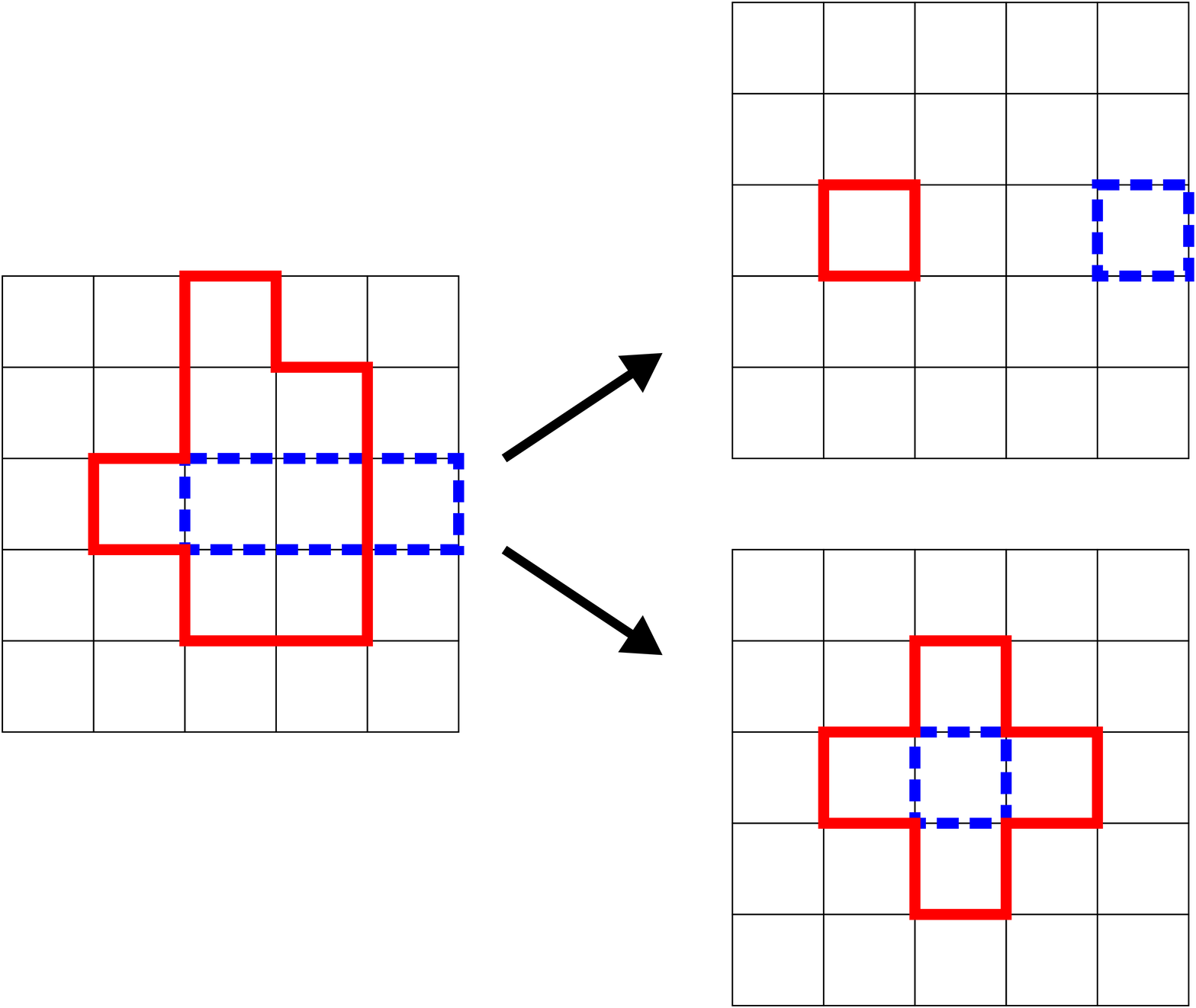}}
\caption[Contractions of cycle sets.]{Examples of cycle sets with two elements and their uncontractible contractions. In case \subref{fig:cont_hex}, the uncontractible contraction is unique. In case
\subref{fig:cont_square}, different uncontractible contractions may be obtained from the cycle set. Two of these are depicted.}
\label{fig:cont}
\end{figure}

Furthermore, we define the notion of non-intersecting cycles. Two edge-disjoint cycles \(c_1\) and \(c_2\) are non-intersecting, if either one is the contraction of the other or if
\( f \cap g = 0 \) for all \( f \in F(c_1) \) and \( g \in F(c_2) \). We remark that one of the two conditions is sufficient, whereas the cycles have to be edge-disjoint.
If two cycles \(c_1\) and \(c_2\) are non-intersecting,
we write \( c_1 \circledcirc c_2\). The two cycles in Fig.~\ref{fig:cont}\subref{fig:cont_hex} are non-intersecting. The cycles in the left part of
Fig.~\ref{fig:cont}\subref{fig:cont_square} are intersecting. The right part of Fig.~\ref{fig:cont}\subref{fig:cont_square} is depicting non-intersecting cycles which have vertices in
common.

\subsection{Theorem}

We now associate multi-particle states to each cycle set \(C\).
\begin{equation}
 \left| \Phi (C) \right\rangle = O^\dagger (C) \left| 0 \right\rangle,
\end{equation}
with
\begin{equation}
\label{eq:operator}
 O^\dagger (C) = \prod_{c_i \in C} \sum_{f \in F(c_i)} b^\dagger_f.
\end{equation}
Since the cycles in cycle sets are edge-disjoint, \( \left| \Phi (C) \right\rangle \) contains no doubly occupied sites on \( L(G) \). The states are ground states minimizing
both the kinetic and interaction energy simultaneously.
Now we are able to state our main theorem.
\begin{theorem}
 The multi-particle states \( \left| \Phi (C^{(u)}) \right\rangle \) belonging to uncontractible cycle sets \(C^{(u)}\) form a basis of the Fock space \(\mathcal{F}_0\) of all the
kernel of \(H\).
\end{theorem}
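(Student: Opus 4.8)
The plan is to recast the kernel condition into two transparent local conditions, then establish linear independence of the states $\left|\Phi(C^{(u)})\right\rangle$ by a triangularity argument in the occupation-number basis, and finally prove that they span $\mathcal{F}_0$. Throughout, states of different particle number lie in orthogonal sectors, so I would fix the number $N$ of cycles and argue sector by sector. First I would rewrite the kinetic part: introducing the vertex operators $a_x=\sum_{e\ni x}b_e$ and using $t_{ee'}=t\sum_x b_{xe}b_{xe'}$, the hopping term becomes $t\sum_x a_x^\dagger a_x$, a sum of positive operators. Since $U_e>0$ the interaction is positive as well, so $H\ge 0$ and $\left|\Psi\right\rangle\in\mathcal{F}_0=\ker H$ if and only if both parts annihilate it, i.e. (i) $a_x\left|\Psi\right\rangle=0$ for every vertex $x$ and (ii) $b_e b_e\left|\Psi\right\rangle=0$ for every edge $e$. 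Membership $\left|\Phi(C)\right\rangle\in\mathcal{F}_0$ then follows as already indicated: edge-disjointness of the cycles gives (ii), and each $\sum_{f\in F(c_i)}b^\dagger_f$ lies in $\ker B$, so that $a_x$ commutes through it down to the vacuum, giving (i). It remains to prove independence and completeness.

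For linear independence I would expand each state in the orthonormal hard-core occupation basis. Writing $\sum_{f\in F(c_i)}b^\dagger_f=\sum_{e}s^{(c_i)}_e b^\dagger_e$ with $s^{(c_i)}_e=\sum_{f\in F(c_i)}s_{fe}\in\{-1,0,1\}$ supported on the boundary cycle $\partial c_i$, expanding the product $O^\dagger(C)$ shows that $\left|\Phi(C)\right\rangle$ is a signed sum over \emph{transversals}: configurations occupying exactly one boundary edge $e_i\in\partial c_i$ per cycle, with coefficient $\prod_i s^{(c_i)}_{e_i}=\pm1$, the $e_i$ being distinct by edge-disjointness. I would then fix a total order on $E(G)$ read off from the plane embedding (a lexicographic sweep) and the induced partial order on cycle sets through their enclosed-area union $\bigcup_i F(c_i)$. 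Using planarity, to each uncontractible $C$ I would assign a distinguished transversal $S_C$ by selecting on each cycle an extremal outer-boundary edge; I would check that $C\mapsto S_C$ is injective, that $S_C$ occurs in $\left|\Phi(C)\right\rangle$, and — this is where uncontractibility enters, guaranteeing these extremal edges cannot be removed by any contraction — that $S_C$ occurs in no $\left|\Phi(C')\right\rangle$ with $C'$ strictly above $C$. The resulting coefficient matrix is triangular with $\pm1$ on the diagonal, forcing every coefficient in a vanishing linear combination to vanish.

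For completeness I would translate (i) into amplitude constraints. Writing $\left|\Psi\right\rangle=\sum_S\psi_S\left|S\right\rangle$, the condition $a_x\left|\Psi\right\rangle=0$ is equivalent to $\sum_{e\ni x,\,e\notin T}\psi_{T\cup\{e\}}=0$ for every vertex $x$ and every $(N-1)$-configuration $T$. Hence, for fixed $T$, the slice $\phi^T_e:=\psi_{T\cup\{e\}}$ lies in the kernel of the incidence matrix of $G$ with the edges of $T$ deleted, i.e. in the cycle space of $G-T$; by the single-particle case (where $\ker B$ is spanned by the face vectors and the uncontractible one-cycle sets are precisely the single faces) each slice is a superposition of cycle flows. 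I would then induct on $N$: using planarity to isolate an outermost cycle $c$ common to the support of $\Psi$, I would factor the operator $\sum_{f\in F(c)}b^\dagger_f$ out of $\Psi$, reduce to an $(N-1)$-particle kernel state, and conclude $\Psi\in\mathrm{span}\{\left|\Phi(C)\right\rangle\}$. A reduction lemma — that any $\left|\Phi(C)\right\rangle$ with contractible $C$ is a linear combination of $\left|\Phi(C')\right\rangle$ over its contractions, and thus over uncontractible cycle sets — then shows the uncontractible states already span $\mathcal{F}_0$; with independence this yields the basis.

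The main obstacle I expect is completeness, and within it two points. The first is globally assembling the local cycle-flow slices $\phi^T$ into a consistent family of cycle sets through the planar peeling induction, where one must show that an outermost cycle can always be split off coherently. The second is the contractible-to-uncontractible reduction lemma, which is subtle because the naive face expansion $\prod_i\sum_{f\in F(c_i)}b^\dagger_f$ produces terms in which faces belonging to different cycles share interior edges and hence create double occupancies; these terms must be shown to cancel, so the reduction cannot be read off term by term and genuinely relies on edge-disjointness together with the planar structure.
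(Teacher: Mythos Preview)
Your reformulation of $\ker H$ as the joint kernel of the $a_x$ and the $b_e^2$ matches the paper's starting point. Your linear-independence argument, however, lives in a different basis: you expand in edge occupations (``transversals'' of the boundary cycles) and seek a triangular ordering, whereas the paper expands in face-operator monomials $\prod_j (b^\dagger_{f_j})^{n_j}$ and exhibits, for each uncontractible $C$, a monomial of maximal multiplicity profile that reconstructs $C$ uniquely (the multiplicity $n_j$ encodes the nesting depth at $f_j$). Your route may be workable, but the area-union order is only partial, so you still owe an argument that $S_C$ does not occur in $\left|\Phi(C')\right\rangle$ for $C'$ incomparable to $C$; the paper sidesteps this by producing an actual injection from monomials to uncontractible cycle sets, with no ordering needed.

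The genuine gap is in completeness. Your peeling induction asserts that from an arbitrary $\Psi\in\mathcal{F}_0$ one can factor a single cycle operator $\sum_{f\in F(c)}b^\dagger_f$, leaving an $(N{-}1)$-particle kernel state. This is false already for $\Psi=\left|\Phi(C_1)\right\rangle+\left|\Phi(C_2)\right\rangle$ when $C_1$ and $C_2$ share no cycle: no common factor exists. What one can hope for is that $\Psi$ is a \emph{linear combination} of such factored states, and this is precisely what the paper establishes, but not by peeling. Its route is: (i) from $b_e^2\Psi=0$ on an edge between adjacent faces $f,g$ derive a binomial recursion on the face-basis coefficients, forcing every ground state to be a combination of $\left|\Phi(C)\right\rangle$ with $C$ an (in general intersecting) cycle set; (ii) rewrite each intersecting cycle-set state as a sum of states with pairwise \emph{non-intersecting} cycles via an explicit two-cycle identity iterated through the set; (iii) reduce non-intersecting cycle sets to uncontractible ones. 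Your slice observation $\phi^T\in\ker B\vert_{G-T}$ is correct, but promoting the slices to global cycle sets is exactly step~(i), and it needs the algebraic input of the hard-core recursion rather than a geometric outermost-cycle argument. As a side remark, your second stated obstacle is not an issue: since the cycles in any cycle set are edge-disjoint, each factor $\sum_{f\in F(c_i)}b^\dagger_f$ is supported on $\partial c_i$ alone, and these boundaries are disjoint, so no $(b^\dagger_e)^2$ ever appears in $\left|\Phi(C)\right\rangle$; the cancellation you are worried about happens already inside each single-cycle factor.
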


\section{Proof}

\label{sec:proof}

We begin this section by defining the states \( \left| \Phi (C^{(u)}) \right\rangle \) in a different way which is more convenient throughout the proof. This definition allows us
to construct the states explicitly. The states are recursively defined.

\subsection{Alternative definition of states}

Let us denote the set of all \( \left| \Phi (C^{(u)}) \right\rangle \) as \(\mathcal{B}\). \(\mathcal{B}\) is the
union of all sets \(\mathcal{B}_N\) of \( \left| \Phi (C^{(u)}_N) \right\rangle \) with fixed particle number \(N\).
\begin{equation}
 \mathcal{B} = \bigcup_{N=0}^\infty \mathcal{B}_N.
\end{equation}
Since the uncontractible cycle sets with \( N=1\) are the sets containing a boundary cycle of a face \(f \in F(G) \), the set \(\mathcal{B}_1\) is the set of single-particle basis states.
\begin{equation}
 \mathcal{B}_1 = \left\{ b^\dagger_f \left| 0 \right\rangle, \ f \in |F(G)| \right\}.
\end{equation}
The set of operators creating these states is called \(\mathcal{O}_1\).
\begin{equation}
 \mathcal{O}_1 = \left\{ b^\dagger_f , \ f \in |F(G)| \right\}.
\end{equation}
The set of operators creating the states of \(\mathcal{B}_N\) is called \(\mathcal{O}_N\). As we know from \eqref{eq:operator}, an element of \(\mathcal{O}_N\) has the following form:
\begin{equation}
\label{eq:operator_uc}
 O^\dagger(C^{(u)}_N) = \prod_{i=1}^N \sum_{f \in F(c_i)} b^\dagger_f.
\end{equation}
We now introduce disjoint subsets %\( C^{(u)}_N (k) \)  of the cycle set \( C^{(u)}_N \).
\begin{equation}
 C^{(u)}_N (k) = \left\{ c_i, \ i \in I_k \subset \{1, \ldots, N\} \right\}
\end{equation}
of the cycle set \( C^{(u)}_N \) which are defined as follows:
\begin{itemize}
 \item For each pair \(c_i \in C^{(u)}_N (k) \), \(c_j \in C^{(u)}_N (l) \) with \( k \neq l \), we may not find three faces \(f \in F(c_i)\), \( g \in F(c_j) \) and \(h \in F(G)\)
       with \(f \cap h \neq \emptyset \) and \(g \cap h \neq \emptyset \). 
 \item In case of \( | C^{(u)}_N (k) | \geq 2 \), there has to exist a \(c_j \in C^{(u)}_N (k)\) for each \(c_i \in C^{(u)}_N (k) , \ i \neq j \), with the following property: We
       may find three (not necessarily different) faces \(f \in F(c_i)\), \( g \in F(c_j) \) and \(h \in F(G)\) with \(f \cap h \neq \emptyset \) and \(g \cap h \neq \emptyset \).
\end{itemize}
In simple words, the cycles in one subset have at most one face between each other. The cycles in different subsets have at least two faces between each other. 
An example of such a partition is illustrated in Fig.~\ref{fig:subset}. We call the subsets close-packed subsets.

\begin{figure}[h]
\centering
\subfloat[]{\label{fig:set1}\includegraphics[width=5cm]{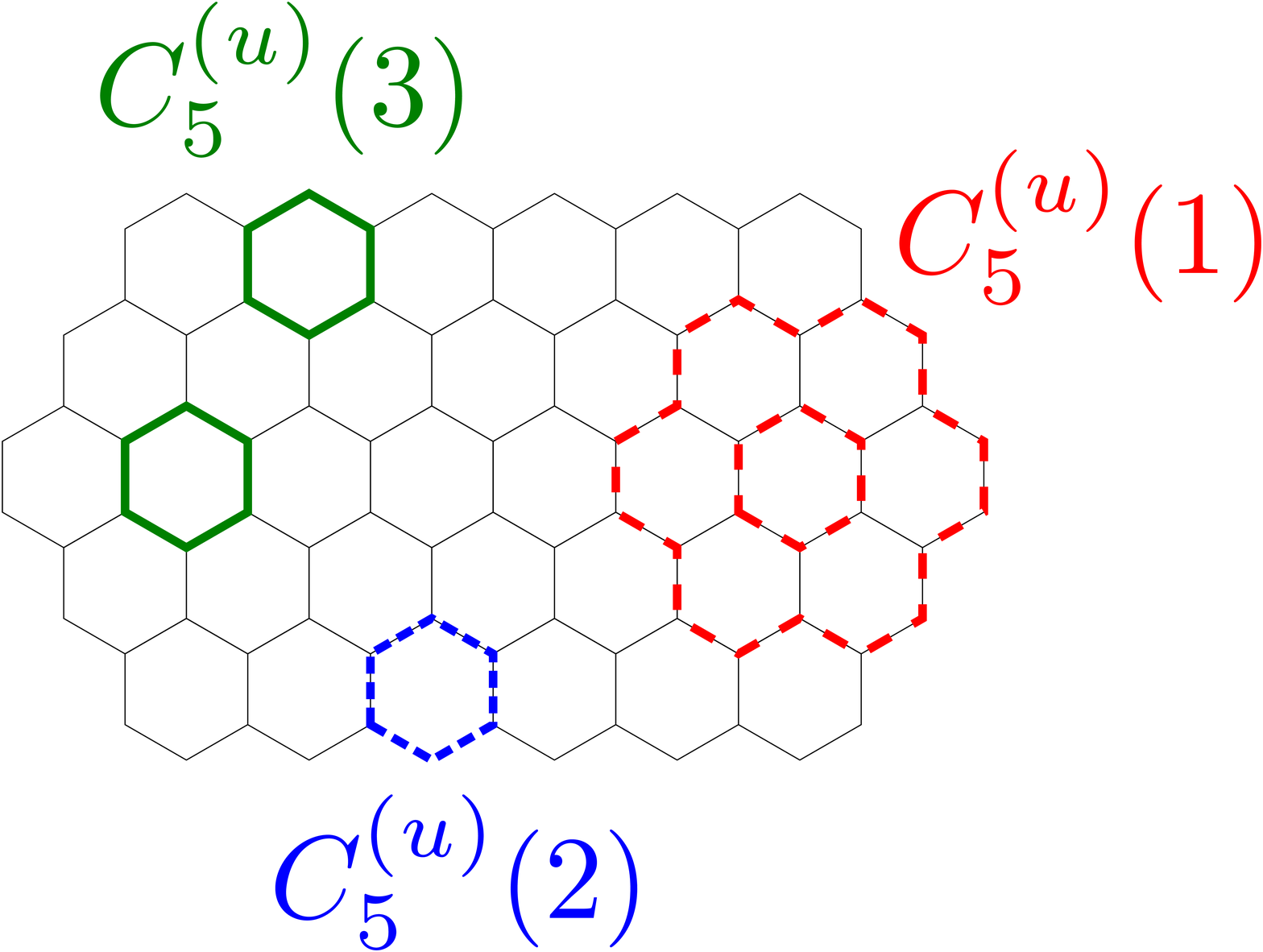}}
\hspace{1cm}
\subfloat[]{\label{fig:set2}\includegraphics[width=5cm]{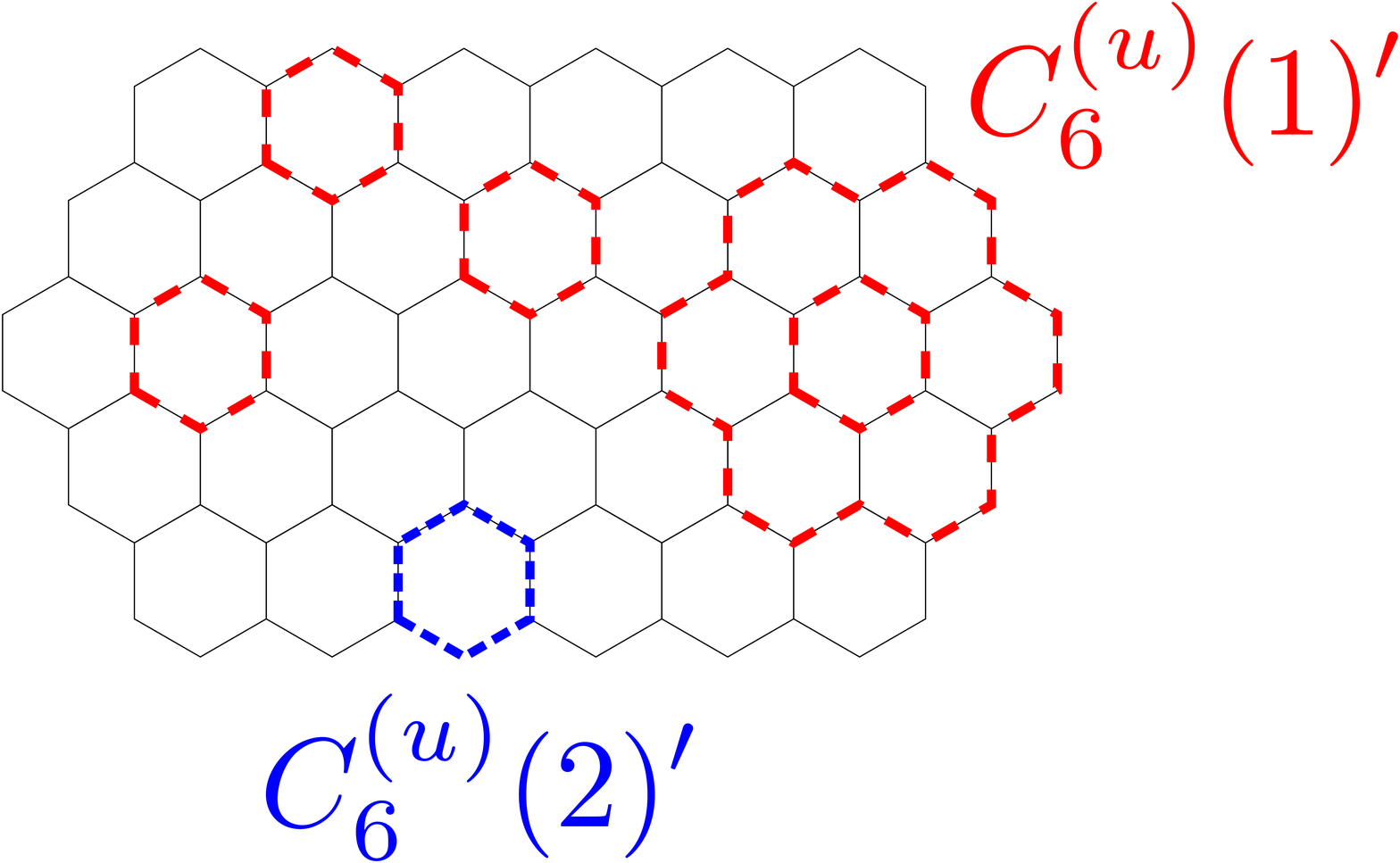}}
\caption[Close-packed subsets of cycle sets.]{Examples of uncontractible cycle sets and their partition into close-packed subsets. The cycle set in \subref{fig:set1} consists of five cycles and
is divided into three close-packed subsets. The one in \subref{fig:set2} contains six cycles but only two close-packed subsets. The additional cycle in part \subref{fig:set2} ``unifies'' the
subsets \( C_5^{(u)}(1) \) and \( C_5^{(u)}(3) \) to the subset \( C_6^{(u)}(1)' \).}
\label{fig:subset}
\end{figure}

We now introduce a set \(\mathcal{\tilde B}_N\) which is constructed from the states \( O^\dagger(C^{(u)}_{N-1}) \left| 0 \right\rangle \in \mathcal{B}_{N-1}\). It contains all
states of the following two forms.
\begin{enumerate}
 \item States that are created by adding a new particle on a boundary cycle of a face \(f\) which has no edge in common with all faces enclosed by cycles of \( C^{(u)}_{N-1} \):
       \begin{equation}
        \left| \Phi \right\rangle = O^\dagger(C^{(u)}_{N-1}) \, b_f^\dagger \left| 0 \right\rangle,
       \end{equation}
       with \( f \cap g = \emptyset \) for all \( g \in \bigcup_{c \in C^{(u)}_{N-1}} F(c) \).
 \item States that are created by adding a new particle on a cycle around a set of close-packed cycles:
       \begin{equation}
        \left| \Phi \right\rangle = O^\dagger(C^{(u)}_{N-1}) \sum_{f \in F(G)} a_{fk} \, b_f^\dagger \left| 0 \right\rangle,
       \end{equation}
       where
%       \begin{displaymath}
       \begin{equation}
        a_{fk} = \begin{cases}
                 1, & \text{if }f \cap g \neq \emptyset \text{ for any } g \in \bigcup_{c \in C^{(u)}_{N-1} (k)} F(c), \\
                 0, & \text{otherwise.}
                 \end{cases}
%       \end{displaymath}
       \end{equation}
       Furthermore, \( g \cap f_0 = \emptyset \) has to be fulfilled for all \(g \in \bigcup_{c \in C^{(u)}_{N-1} (k)} F(c) \), where the unbounded face is denoted by \(f_0\).
\end{enumerate}
We remark that all states in \(\mathcal{\tilde B}_N\) may be created from up to \(N\) elements of \(\mathcal{B}_{N-1}\). Of course, these states are incorporated into
\(\mathcal{\tilde B}_N\) only once. The following proposition relates the abstract notion of states associated to uncontractible cycle sets to the states explicitly constructed by
the above procedure.
\begin{prop}
\label{prop:b_tilde}
 With the above construction we obtain \( \mathcal{\tilde B}_N = \mathcal{B}_N \).
\end{prop}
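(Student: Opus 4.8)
The plan is to prove the set equality \( \mathcal{\tilde B}_N = \mathcal{B}_N \) by induction on the particle number \(N\), establishing the two inclusions separately. The base case \(N=1\) is the single-particle basis \( \mathcal{B}_1 \), which case~(i) reproduces from the vacuum. For the induction step I would set up a correspondence in which each state of \( \mathcal{B}_N \) arises from a state of \( \mathcal{B}_{N-1} \) by adjoining one further cycle, which is either a distant single face (case~(i)) or a cycle encircling a close-packed subset (case~(ii)). Because \( \mathcal{\tilde B}_N \) is a set (each state kept only once), I only need set equality and may ignore the multiplicity coming from the several outermost cycles a given set may possess.

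First I would settle the operator identities, which make the two constructions literally multiply \( O^\dagger(C^{(u)}_{N-1}) \) by a single cycle-creation operator. Case~(i) is immediate: \( b^\dagger_f \) is the cycle operator of the one-face cycle \(\partial f\), so \( O^\dagger(C^{(u)}_{N-1})\,b^\dagger_f = O^\dagger(C^{(u)}_{N-1}\cup\{\partial f\}) \). For case~(ii) the key geometric claim is that the face set \( \{ f : a_{fk}=1 \} \) is exactly the one-face neighbourhood of \( \bigcup_{c\in C^{(u)}_{N-1}(k)} F(c) \), that this region is simply connected, and that its boundary is a single cycle \( c_{\mathrm{new}} \) with \( F(c_{\mathrm{new}}) = \{ f : a_{fk}=1\} \). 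Simple connectedness follows from the close-packing property (cycles in one subset are at most one face apart, so thickening by one layer fills every gap), while the condition \( g\cap f_0=\emptyset \) guarantees that the boundary stays bounded. Granting this, the cancellation of the \( s_{fe} \) on interior edges gives \( \sum_f a_{fk} b^\dagger_f = \sum_{f\in F(c_{\mathrm{new}})} b^\dagger_f \), so case~(ii) produces precisely \( O^\dagger(C^{(u)}_{N-1}\cup\{c_{\mathrm{new}}\}) \).

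For the inclusion \( \mathcal{\tilde B}_N\subseteq\mathcal{B}_N \) it then remains to check that the enlarged cycle set is uncontractible. In case~(i) the added one-face cycle cannot be contracted and its face is edge-disjoint from the rest, so any contraction of the new set would restrict to a contraction of \( C^{(u)}_{N-1} \), which is excluded. In case~(ii) the cycle \( c_{\mathrm{new}} \) hugs the subset at distance one, hence cannot be shrunk without meeting an inner cycle, and the co-shrinking argument below converts any remaining contraction into one of \( C^{(u)}_{N-1} \). For the reverse inclusion \( \mathcal{B}_N\subseteq\mathcal{\tilde B}_N \), given an uncontractible \( C^{(u)}_N \) I would pick an \emph{outermost} cycle \(c\) (one enclosed by no other). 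A structure lemma identifies \(c\) as either a single face or a cycle that hugs the cycles it encircles at distance exactly one: a multi-face cycle with empty interior, or one that sits loosely around its contents, can be shrunk---by itself or jointly with the cycles inside it---yielding a contraction, which is impossible. If \(c\) is a single face then, being outermost, edge-disjointness forces \( c\cap g=\emptyset \) for every face \(g\) enclosed by the remaining cycles, so \(c\) is added by case~(i); if \(c\) is a tight encircling cycle it is added by case~(ii). In either case the remainder \( C^{(u)}_N\setminus\{c\} \) is uncontractible and the induction hypothesis applies.

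The main obstacle is the structural and geometric part, i.e.\ the two lemmas just invoked. The first is that every outermost cycle of an uncontractible set is a single face or a distance-one encircling cycle; the second is that uncontractibility is preserved when such a cycle is peeled off or adjoined. Both rest on the same \emph{co-shrinking} device: if a family of nested cycles is not tight, one may shrink an inner cycle and then the cycles around it, strictly decreasing \( \bigcup_i F(c_i) \) while keeping all cycles edge-disjoint in the plane graph. Making this precise requires care with nested configurations---where the enclosed-face set of an inner cycle is already covered by an outer one, so that shrinking the inner cycle alone does not decrease the union---and this bookkeeping of which shrinks genuinely reduce the total enclosed face set is where the real work lies. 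The simple-connectedness of the thickened region in case~(ii) is a further planarity fact that has to be extracted from the close-packing definition of the subsets.
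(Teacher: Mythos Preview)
Your plan follows the paper's own argument: both inclusions are obtained by regarding the construction as adjoining one more cycle---either a lone face or a tight encircling cycle---to an uncontractible \((N-1)\)-cycle set, with the reverse direction handled by peeling one such cycle off. The paper's proof is considerably terser than yours, taking as ``clear'' the uncontractibility of the enlarged set and asserting without further argument that every uncontractible \(N\)-set arises this way; the geometric lemmas you isolate (simple-connectedness of the thickened region in case~(ii), the structure of outermost cycles, preservation of uncontractibility under peeling/adjoining) are precisely what a fully rigorous version would require but are not spelled out in the paper.
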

\begin{proof}[Proof of proposition \ref{prop:b_tilde}]
Any state of \( \mathcal{\tilde B}_N \) is a state that may be written as \( O^\dagger (C_N) \left| 0 \right\rangle \). It is clear that the cycle on which the \(N\)-th particle
is added cannot be contracted further. Since the other cycles belong to an uncontractible cycle set, the cycle set \( C_N \) is also uncontractible. \( \mathcal{B}_{N-1} \) contains
states associated to all possible uncontractible cycle sets with \(N-1\) cycles. In the construction of \( \mathcal{\tilde B}_N \), all possible \(N\)-th uncontractible cycles are
added to a cycle set of \( \mathcal{B}_{N-1} \). Hence, \( \mathcal{\tilde B}_N \) contains all states associated to uncontractible cycle sets consisting of \(N\) cycles and 
\( \mathcal{\tilde B}_N = \mathcal{B}_N \).
\end{proof}
The states in \( \mathcal{B}_N \) constructed recursively from \( \mathcal{B}_1 \) by the above procedure are all states associated to cycle sets with pairwise non-intersecting cycles.
This leads to 
\begin{cor}
 All uncontractible cycle sets consist of pairwise non-intersecting cycles.
\end{cor}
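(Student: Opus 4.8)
The plan is to deduce the corollary directly from Proposition~\ref{prop:b_tilde} by tracking the non-intersecting property through the recursive construction of \( \mathcal{\tilde B}_N \). By that proposition, every uncontractible cycle set with \(N\) cycles arises as a cycle set \( C^{(u)}_{N-1} \) augmented by one further cycle \(c_N\) added according to rule (i) or rule (ii). I would therefore argue by induction on \(N\): the base case \(N=1\) is vacuous, and for the inductive step I assume that the cycles of \( C^{(u)}_{N-1} \) are pairwise non-intersecting, so that only the newly added cycle \(c_N\) must be checked against each \(c_i \in C^{(u)}_{N-1}\).

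For a cycle added by rule (i), \(c_N\) is the boundary of a single face \(f\) with \( f \cap g = \emptyset \) for every \( g \in \bigcup_{c \in C^{(u)}_{N-1}} F(c) \). Since \( F(c_N) = \{f\} \), this is precisely the requirement \( f' \cap g = \emptyset \) for all \( f' \in F(c_N) \) and all \( g \in F(c_i) \), so \( c_N \circledcirc c_i \) for every \(i\) by the empty-intersection alternative in the definition of non-intersecting cycles.

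For a cycle added by rule (ii), \(c_N\) is the cycle enclosing exactly the faces \( \{f : a_{fk} = 1\} \), i.e. those sharing an edge with some face of a cycle in the close-packed subset \( C^{(u)}_{N-1}(k) \). I would split the previous cycles according to the subset in which they lie. For \( c_i \in C^{(u)}_{N-1}(k) \) every \( f \in F(c_i) \) satisfies \( a_{fk}=1 \), hence \( F(c_i) \subseteq F(c_N) \), and since the inclusion is strict \(c_i\) is a contraction of \(c_N\); thus \( c_N \circledcirc c_i \) by the contraction alternative. For \( c_j \in C^{(u)}_{N-1}(l) \) with \( l \neq k \) I would invoke the first defining property of the close-packed subsets: if some \( h \in F(c_N) \) had \( h \cap g \neq \emptyset \) for a \( g \in F(c_j) \), then, together with a face \( f \in F(c_i) \) (for \(c_i\) in subset \(k\)) satisfying \( f \cap h \neq \emptyset \), which exists by the very definition of \(a_{fk}\), we would obtain exactly the forbidden triple \(f, g, h\). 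Hence \( h \cap g = \emptyset \) for all \( h \in F(c_N) \) and all \( g \in F(c_j) \), giving \( c_N \circledcirc c_j \) once more by the empty-intersection alternative.

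The main obstacle I anticipate is not the pairwise checks themselves but justifying that the faces \( \{f : a_{fk}=1\} \) in rule (ii) genuinely bound a single well-defined cycle \(c_N\) with simply connected interior, and that its boundary edges are disjoint from those of the inner cycles. This is where the close-packedness of the subset and the hypothesis \( g \cap f_0 = \emptyset \), that the subset never touches the unbounded face, are essential: they ensure that the buffer of adjacent faces forms a single connected annular region encircling the subset rather than several disconnected pieces. Once \(c_N\) is known to be a genuine cycle, the induction closes, and every uncontractible cycle set is seen to consist of pairwise non-intersecting cycles.
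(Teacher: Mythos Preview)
Your proposal is correct and follows exactly the route the paper takes: the corollary is read off from Proposition~\ref{prop:b_tilde}, since the recursive construction of \( \mathcal{\tilde B}_N \) only ever adjoins a cycle that is non-intersecting with the already present ones. The paper states this in a single sentence without spelling out the two rule cases, whereas you make the inductive check explicit; your analysis of rule~(ii), including the use of the forbidden-triple condition for cycles in different close-packed subsets, is precisely the content the paper leaves implicit.
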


We remark the following concerning the construction of \( \mathcal{B}_N \). If no particle can be added to any state in \(\mathcal{B}_{N-1}\) on an additional cycle obeying the rules
of construction given above, \( \mathcal{B}_N \) will be empty. Then our construction is no longer suitable and we have reached the critical filling factor.

\subsection{Linear independence}

We now prove the linear independence of the states in \( \mathcal{B}_N \). We show that each state \( \left| \Phi (C_N^{(u)}) \right\rangle \in \mathcal{B}_N \)
contains a combination of creation operators that may not be found in any other state of \( \mathcal{B}_N \). This is a sufficient property for the states to be linearly independent.
The \(k\)-th summand of \( \left| \Phi (C^{(u)}_N) \right\rangle \) may be written as
\begin{equation}
 \left| \varphi (k) \right\rangle = \prod_{j=1}^N \left(b^\dagger_{f_j} \right)^{n_j^k} \left| 0 \right\rangle,
\end{equation}
with \(f_j \in F(G)\), \( n_j^k \in \{0, \dots, N\}\), \( n_1^k \geq \dots \geq n_N^k\) and \(\sum_j n_j^k = N \). We drop the label \( C^{(u)}_N \) for the sake of
readability. The operators are ordered in such a way that their exponents are decreasing with increasing \(j\). If we recall the definition of \( \left| \Phi (C^{(u)}_N) \right\rangle
= O^\dagger(C^{(u)}_N) \left| 0 \right\rangle \), we can see that expanding the product in \eqref{eq:operator_uc} leads to \(M = \prod_{i=1}^N |F(c_i)| \) summands.  We denote the
index set \( \{ 1, \ldots , M \} \) of the individual summands as \( K_0 \). Let us now introduce subsets \(K_j \subset K_0\) defined as follows:
\begin{equation}
 K_j = \left\{ k \ | \  k = \operatorname*{\arg\max}_{k \in K_{j-1}} \left( n^k_j\right) \right\}.
\end{equation}
We choose an arbitrary summand \(\left| \varphi (k^\ast) \right\rangle\) with \(k^\ast \in K_N\). This leads to the following proposition.
\begin{prop}
\label{prop:independence}
 The summand \(\left| \varphi (k^\ast) \right\rangle\) is exclusively contained in \( \left| \Phi (C^{(u)}_N) \right\rangle \), i.e. it uniquely determines
 \( \left| \Phi (C^{(u)}_N) \right\rangle \).
\end{prop}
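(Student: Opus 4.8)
The plan is to treat $\left| \varphi (k^\ast) \right\rangle$ as a leading monomial that is unique to $\left| \Phi (C^{(u)}_N) \right\rangle$; establishing this for every state then yields the claimed linear independence at once, since in any vanishing linear combination the coefficient of such a monomial isolates exactly one coefficient of the combination.

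First I would record two facts that make the monomials a legitimate bookkeeping device. The operators $\{b^\dagger_f\}_{f \in F(G)}$ are linearly independent, because the columns of $S^T$ form a basis of $\ker(B)$; hence the monomials $\prod_j (b^\dagger_{f_j})^{n_j}$ in distinct faces are themselves linearly independent vectors, and a state is determined by the coefficients of the monomials occurring in it. Moreover, expanding $O^\dagger(C^{(u)}_N) = \prod_{i=1}^N \sum_{f \in F(c_i)} b^\dagger_f$ produces every summand with coefficient $+1$, so collecting equal monomials arising from different choice functions can only add positive integers and never cancel. In particular $\left| \varphi (k^\ast) \right\rangle$ really occurs in $\left| \Phi (C^{(u)}_N) \right\rangle$ with nonzero coefficient, and it remains to show that it occurs in no other $\left| \Phi (C^{\prime(u)}_N) \right\rangle$.

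Next I would give the geometric meaning of the selection $K_0 \supseteq K_1 \supseteq \dots \supseteq K_N$. Each summand corresponds to a choice function $i \mapsto f_i \in F(c_i)$, and its sorted exponent vector records how many cycles are sent to a common face. Since the cycles of an uncontractible set are pairwise non-intersecting (Corollary), two of their face sets are either disjoint or nested, so the number of cycles enclosing a given face equals the local nesting depth produced by the ring cycles around close-packed subsets. The lexicographic maximisation therefore selects the choice function that concentrates the particles as deeply into this nesting as possible: $n_1^{k^\ast}$ is the maximal number of cycles sharing a face, $n_2^{k^\ast}$ the maximal pile achievable on the remaining cycles, and so on. Thus $\left| \varphi (k^\ast) \right\rangle$ is a canonical, maximally overlapping representative of the face structure of $C^{(u)}_N$.

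Finally, the heart of the argument is to recover $C^{(u)}_N$ from $\left| \varphi (k^\ast) \right\rangle$, for which I would induct on $N$ using the recursive description $\tilde{\mathcal{B}}_N = \mathcal{B}_N$ of Proposition \ref{prop:b_tilde}. The faces carrying the largest multiplicities pin down the innermost close-packed subset together with the cycle enclosing it: a ring cycle $c$ around a subset satisfies $F(c') \subseteq F(c)$ for every enclosed $c'$, so the maximally piled faces lie in this nested core and their multiplicity counts the cycles wrapping it. Identifying and removing this outer cycle reduces the monomial to the leading monomial of an uncontractible $(N-1)$-cycle set, to which the inductive hypothesis applies; since uncontractibility forbids shrinking any ring and non-intersection forbids any further overlap, each peeling step is forced. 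Consequently any $C^{\prime(u)}_N$ whose expansion contains $\left| \varphi (k^\ast) \right\rangle$ must reproduce the same nesting at every stage and hence coincide with $C^{(u)}_N$. I expect the main obstacle to be exactly this reconstruction: one must rule out that a genuinely different uncontractible cycle set accidentally realises the same maximally concentrated face multiset, which is where planarity, edge-disjointness and the close-packed partition have to be used to show that the multiplicity pattern of enclosed faces determines the cycles uniquely.
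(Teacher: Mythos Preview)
Your overall strategy---show that $\left|\varphi(k^\ast)\right\rangle$ determines $C^{(u)}_N$ by reconstructing the cycle set from the face multiplicities---is exactly the paper's. The paper's proof is equally short: given $\left|\varphi(k^\ast)\right\rangle$, it places one particle on the face $f_1$ of highest multiplicity and $n_1^{k^\ast}-1$ particles on concentric rings around it, then moves to $f_2$, enlarging existing rings if necessary to keep them edge-disjoint, and so on; it then asserts that this is the only way to produce a state of $\mathcal{B}_N$ containing that monomial. So the paper builds the cycles from the inside out, processing faces in order of decreasing multiplicity, whereas you propose to peel off the outermost ring and induct on $N$ via Proposition~\ref{prop:b_tilde}. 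These are dual presentations of the same reconstruction idea.

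Your preliminary remarks are sharper than what the paper spells out: the linear independence of the monomials in the $b^\dagger_f$ and the fact that all coefficients in the expansion are $+1$ (so no cancellation) are both needed for the argument and the paper leaves them implicit. On the other hand, the step you flag as the main obstacle---that no other uncontractible cycle set can realise the same maximally concentrated face multiset---is precisely the step the paper dispatches in one sentence (``This is the only procedure\dots''). Your inductive peeling has one wrinkle that the paper's direct build avoids: after removing the outer ring you need that the residual monomial is again the $k^\ast$-monomial of the residual $(N-1)$-set, which is not automatic when several close-packed subsets have the same maximal depth. The paper's face-by-face procedure sidesteps this because it never compares leading monomials of subsystems; it just argues that at each face the multiplicity forces exactly one face-cycle plus the required number of surrounding rings, and uncontractibility leaves no freedom in how those rings sit.
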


\begin{proof}[Proof of proposition \ref{prop:independence}]
Let us construct a state \( \left| \Phi (C^{(u)}_N) \right\rangle \) containing a given \(\left| \varphi (k^\ast) \right\rangle\). In order to have \(f_1\) \(n_1^k\)-fold occupied,
we put one particle on \(f_1\) and \(n_1^k-1\) on cycles around it. We proceed in the same manner with \(f_2\). The already existing cycles around \(f_1\) may have to be enhanced
to avoid two cycles having edges in common. We continue the procedure over all occupied faces in \(\left| \varphi (k^\ast) \right\rangle\). If any of the steps is not possible
according to the given rules, \(\left| \varphi (k^\ast) \right\rangle\) is not a state fulfilling the assumptions of proposition \ref{prop:independence}. This is the only procedure
to construct an \(N\)-particle state of the set \(\mathcal{B}_N\) containing \(\left| \varphi (k^\ast) \right\rangle\). It leads to a unique state
\( \left| \Phi (C^{(u)}_N) \right\rangle \).
\end{proof} %This shows the linear independence of the states in \( \mathcal{B}_N \).

\subsection{Spanning property}

We now proceed by showing that each ground state with \(N\) particles may be written as a linear combination of states of \(\mathcal{B}_N\). The proof is divided into three steps.
First, we show that every \(N\)-particle ground state may be expressed as a linear combination of states \( \left| \Phi (C) \right\rangle \) that are associated to a not
necessarily uncontractible cycle set. In the second step, we demonstrate that each of the states \( \left| \Phi (C) \right\rangle \) may be decomposed into a sum of states
\( \left| \Phi (D^{(n)}) \right\rangle \) where the \(D^{(n)}\) are cycle sets with pairwise non-intersecting cycles. Finally, we prove that a state associated to a cycle set
with this property may be written as a sum of elements of \(\mathcal{B}_N\).

Each \(N\)-particle state with kinetic energy zero may be written as
\begin{equation}
\label{eq:general}
 \left| \Phi \right\rangle = \sum_{f_1,\ldots,f_N}{\Phi \left(f_1,\ldots,f_N\right)b^\dagger_{f_1}\ldots b^\dagger_{f_N}} \left| 0 \right\rangle.
\end{equation} Any state \(\left| \Phi \right\rangle \) is a ground state of \(H\) if and only if there is no doubly occupied site:
\begin{equation}
\label{eq:condition}
 b_e^2 \left| \Phi \right\rangle = 0 \qquad \text{for all }e \in E(G).
\end{equation}

\begin{prop}
\label{prop:noedge}
 Each ground state \( \left| \Phi_G \right\rangle \) of \(H\) on \(L(G)\) with \(N\) particles may be written as a linear combination of states \( \left| \Phi (C_i) \right\rangle \),
 where the \(C_i\) are cycle sets.
 \begin{equation}
 \label{eq:prop3}
  \left| \Phi_G \right\rangle = \sum_i \Phi_i \left| \Phi (C_i) \right\rangle = \sum_i \Phi_i \prod_{j=1}^N \sum_{f \in F(c_{ij})} b_f^\dagger \left| 0 \right\rangle.
 \end{equation}
 Here, \(c_{ij}\) is the \(j\)-th cycle of the \(i\)-th cycle set.
\end{prop}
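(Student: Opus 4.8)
The plan is to prove \eqref{eq:prop3} by showing that the states $\left| \Phi (C) \right\rangle$ associated to \emph{arbitrary} cycle sets already span the whole $N$-particle ground-state manifold; the sharpenings to non-intersecting and then uncontractible cycle sets are left to the two following steps. That cycle-set states lie in the ground space was already noted below \eqref{eq:operator}, so only the spanning direction is at issue. First I would invoke \eqref{eq:general}: having kinetic energy zero, $\left| \Phi_G \right\rangle$ lies in the $N$-fold symmetric power of the single-particle ground space and is therefore a linear combination of products $b^\dagger_{f_1} \cdots b^\dagger_{f_N}$ of face operators. The extra content of being a ground state is the hard-core condition \eqref{eq:condition}; rewriting $\left| \Phi_G \right\rangle$ in the edge-occupation basis, $b_e^2 \left| \Phi_G \right\rangle = 0$ for every $e$ is equivalent to saying that its amplitude is supported only on configurations in which each edge of $G$ carries at most one particle.

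The structural fact I would isolate next is that this class of states is stable under annihilation. Since $\left[ b_e , b^\dagger_f \right] = s_{fe}$, the operator $b_e$ sends a product of $N$ face operators to a sum of products of $N-1$ face operators, so $b_e \left| \Phi_G \right\rangle$ again has kinetic energy zero. Removing one particle from a configuration with pairwise distinct edges leaves the remaining edges distinct, so $b_e \left| \Phi_G \right\rangle$ also still obeys \eqref{eq:condition}. Hence the hard-core, kinetic-energy-zero states form a family closed under the $b_e$, which opens the door to an induction on the particle number $N$.

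For the base case $N=1$, every single-particle ground state is an element of $\ker (B)$, hence a combination of the face boundaries; as each $b^\dagger_f$ is itself the one-cycle operator $\sum_{g \in F(c)} b^\dagger_g$ for the boundary cycle $c$ of $f$, this is already a combination of cycle-set states. For the inductive step I would peel off a single cycle: fixing an edge $e_0$ on which $\left| \Phi_G \right\rangle$ has support, I would isolate the flat-band amplitude threading the faces around $e_0$ and write $\left| \Phi_G \right\rangle = \sum_\alpha \left( \sum_{f \in F(c_\alpha)} b^\dagger_f \right) \left| \Psi_\alpha \right\rangle$, where each $c_\alpha$ is a cycle and each $\left| \Psi_\alpha \right\rangle$ is an $(N-1)$-particle hard-core, kinetic-energy-zero state whose occupied edges are disjoint from $c_\alpha$. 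Applying the inductive hypothesis to every $\left| \Psi_\alpha \right\rangle$ and multiplying back by the edge-disjoint cycle operator would then express $\left| \Phi_G \right\rangle$ in the form \eqref{eq:prop3}, with the $C_i$ genuine, that is edge-disjoint, cycle sets.

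The routine part is the reduction via \eqref{eq:general} and \eqref{eq:condition} together with the inductive scaffolding; the hard part is the peeling itself. One has to show that the single-particle amplitude passing through $e_0$ really organizes into closed cycles --- this is exactly where $2$-connectedness (so that every edge lies on a cycle) and planarity (the simply connected face regions $F(c_\alpha)$ and the alternating signs $s_{fe}$ that cancel all interior edges) are needed --- and that such a cycle can be chosen edge-disjoint from the remaining $N-1$ particles without spoiling the hard-core or kinetic-energy-zero character of $\left| \Psi_\alpha \right\rangle$. Controlling the possible ``residual'' amplitude that does not factor cleanly, and checking that the collected terms are consistent, is the delicate point; here the leading-summand bookkeeping of Proposition~\ref{prop:independence} can be reused to organize and combine the contributions.
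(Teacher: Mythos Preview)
Your inductive strategy is genuinely different from the paper's proof, which does not induct on $N$. The paper instead analyses the constraint $b_e^2|\Phi\rangle=0$ locally: for an edge $e$ shared by faces $f$ and $g$, it isolates all terms of $|\Phi\rangle$ containing exactly $n$ factors from $\{b_f^\dagger,b_g^\dagger\}$ together with a fixed remaining product $b_{f_1}^\dagger\cdots b_{f_m}^\dagger$, applies $b_e^2$, and obtains a three-term linear recurrence on the coefficients $\Phi_{nl}$ indexed by the exponent $l$ of $b_f^\dagger$. Solving the recurrence explicitly forces the $(f,g)$-dependence to be $\Phi_n^f\,b_f^\dagger(b_f^\dagger+b_g^\dagger)^{n-1}+\Phi_n^g\,b_g^\dagger(b_f^\dagger+b_g^\dagger)^{n-1}$. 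This binomial identity is the crux: across every edge the face operators must pair into the combination $b_f^\dagger+b_g^\dagger$, which is precisely what makes each particle live on a cycle rather than on a single face boundary. The paper then argues that this edge-wise pairing, applied at all edges simultaneously, can be reorganised into products of mutually edge-disjoint cycle operators.

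Your proposal has a real gap at exactly the step you yourself flag as delicate. The peeling decomposition $|\Phi_G\rangle=\sum_\alpha\bigl(\sum_{f\in F(c_\alpha)}b_f^\dagger\bigr)|\Psi_\alpha\rangle$, with each $|\Psi_\alpha\rangle$ an $(N{-}1)$-particle hard-core flat-band state supported off $c_\alpha$, is asserted but not established. Your closure observation is correct --- $b_{e'}^2 b_e|\Phi_G\rangle=b_e b_{e'}^2|\Phi_G\rangle=0$ --- but it does not by itself produce the factorisation: knowing that $b_{e_0}|\Phi_G\rangle$ is again a ground state neither singles out a cycle through $e_0$ along which the single-particle amplitude closes, nor guarantees that the residual $(N{-}1)$-particle piece avoids the edges of that cycle. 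Many cycles pass through $e_0$, the splitting is far from unique, and nothing in your outline controls the ``residual'' you mention. In fact, making the peeling rigorous requires essentially the same edge-by-edge algebra the paper performs; the binomial identity above is exactly what certifies that the amplitude closes into cycles and that the leftover factors can be taken edge-disjoint. So the lemma you are deferring is not a shortcut around the paper's computation but a repackaging of it. The appeal to Proposition~\ref{prop:independence} is also misplaced: that result distinguishes already-constructed cycle-set states via a leading monomial, and gives no mechanism for extracting a cycle factor from an unstructured ground state.
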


\begin{proof}[Proof of proposition \ref{prop:noedge}]
Let us examine the constraint \eqref{eq:condition} for an edge \(e\) belonging to the faces \(f\) and \(g\). If we recall the definition of the \(b_f^\dagger\) \eqref{eq:face_op},
we find the commutator \([b_e, b_f^\dagger] = s_{fe}\). Without loss of generality, we may set \([b_e, b_f^\dagger] = 1\) and \([b_e, b_g^\dagger] = -1\). We now consider the sum of all
terms of \(\left| \Phi \right\rangle\) in \eqref{eq:general} containing \(n\) operators \(b_f^\dagger\) or \(b_g^\dagger\) and an arbitrary but fixed combination
\(b^\dagger_{f_1} \ldots b^\dagger_{f_m}\).
\begin{equation}
 \left| \Phi_n^e \right\rangle = \sum_{l=0}^n \Phi_{nl} \, b^\dagger_{f_1} \ldots b^\dagger_{f_m} \left( b^\dagger_f \right)^l \left( b^\dagger_g \right)^{n-l} \left| 0 \right\rangle,
\end{equation}
with \(n + m = N\) and \( b^\dagger_{f_1} \ldots b^\dagger_{f_m} \neq f,g \). Applying condition \eqref{eq:condition} we obtain
 \begin{eqnarray}
  \fl b_e^2 \left| \Phi_n^e \right\rangle = \ b^\dagger_{f_1} \ldots b^\dagger_{f_m} \sum_{l=0}^n \Phi_{nl} & \left[ l(l-1) \left( b_f^\dagger \right)^{l-2} 
  \left( b_g^\dagger \right)^{n-l} - 2l(n-l) \left( b_f^\dagger \right)^{l-1} \left( b_g^\dagger \right)^{n-l-1} \right. \nonumber \\
  & \left.  + (n-l)(n-l-1) \left( b_f^\dagger \right)^{l} 
     \left( b_g^\dagger \right)^{n-l-2} \right] \left| 0 \right\rangle.
\end{eqnarray}
Since this is the only term containing \(n-2\) operators \( b_f^\dagger \) or \( b_g^\dagger \) plus the specific combination \( b^\dagger_{f_1} \ldots b^\dagger_{f_m} \), it is linearly
independent of all other terms in \( b_e^2 \left| \Phi \right\rangle \) and has to vanish separately. By substituting the summation indices, we obtain terms containing the same number of
\( b_f^\dagger \) and \( b_g^\dagger \) which are as well linearly independent of each other. This finally leads to the constraint
\begin{equation}
\label{eq:nl}
 l(l-1) \Phi_{nl} - 2(l-1)(n-l+1) \Phi_{nl-1} + (n-l+2)(n-l+1) \Phi_{nl-2} = 0.
\end{equation}
This means that only two of the \( \Phi_{nl} \) are arbitrary for a fixed \( n \). We show that we may write
\begin{equation}
\label{eq:fg}
 \left| \Phi_n^e \right\rangle = b^\dagger_{f_1} \ldots b^\dagger_{f_m} \left( \Phi_n^f \, b^\dagger_f \left( b^\dagger_f + b^\dagger_g \right)^{n-1} 
  + \Phi_n^g \,  b^\dagger_g \left( b^\dagger_f + b^\dagger_g \right)^{n-1} \right) \left| 0 \right\rangle.
\end{equation}
The expression is valid if

\begin{equation}
 \Phi_{nl} = \binom {n-1} {l-1} \Phi_n^f + \binom {n-1} {l} \Phi_n^g.
\end{equation}
Let us prove this by induction on \(l\). We set \( \Phi_{n0} = \Phi_n^g \) and \( \Phi_{n1} = (n-1) \Phi_n^g + \Phi_n^f \) and assume that the statement holds for \(l-2\) and \(l-1\).
If we solve \eqref{eq:nl} for \(\Phi_{nl}\), it reads
\begin{eqnarray}
\fl \Phi_{nl} & = &\left[ \frac{2}{l}(n-l+1) \binom {n-1} {l-2} - \frac{(n-l+2)(n-l+1)}{l(l-1)} \binom {n-1} {l-3} \right] \Phi_n^f \nonumber \\
              &   &+ \left[ \frac{2}{l}(n-l+1) \binom {n-1} {l-1} - \frac{(n-l+2)(n-l+1)}{l(l-1)} \binom {n-1} {l-2} \right] \Phi_n^g \nonumber \\
              & = &\binom {n-1} {l-1} \Phi_n^f + \binom {n-1} {l} \Phi_n^g,
\end{eqnarray}
which proves \eqref{eq:fg}.

If we reformulate \eqref{eq:general}, each state with kinetic energy zero may be written as
\begin{equation}
\label{eq:nocommon}
 \left| \Phi \right\rangle = \sum_{i \in I} \Phi_i \prod_{j=1}^N  \sum_{f \in F(c(i,j)) } b^\dagger_f \left| 0 \right\rangle,
\end{equation}
where \(I\) is an index set with \( |I| \leq \binom {|F(G)|+N-1} {N} \) and the \(c(i,j)\) are cycles on \(G\). Without loss of generality, we assume that
the summands \( \prod_{j=1}^N  \sum_{f \in F(c(i,j)) } b^\dagger_f \left| 0 \right\rangle \) are linearly independent. Note that the \(c(i,j)\) for fixed \(i\) do not necessarily form
cycle sets. The expression in \eqref{eq:nocommon} is not equal to \eqref{eq:prop3} in the proposition. We now claim that if \( \left| \Phi \right\rangle \) is a ground state, we have to be able
to order the right hand side of \eqref{eq:nocommon} in such a way that we have
\begin{equation}
\label{eq:knack}
 b_e \sum_{f \in F(c(i,j))} b^\dagger_f \left| 0 \right\rangle \neq 0 \ \Rightarrow \ b_e \sum_{f \in F(c(i,k))} b^\dagger_f \left| 0 \right\rangle = 0 
\end{equation}
for all \( k \neq j \) and for all \( e \in E(G)\). If \eqref{eq:knack} were not true, \( \prod_{j=1}^N  \sum_{f \in F(c(i,j)) } b^\dagger_f \left| 0 \right\rangle \)
would be at least doubly occupied on an edge \(e\). Let us assume that it is doubly occupied and \(e\) belongs to the cycles \(c(i,1)\) and \(c(i,2)\). 
The double occupancies are caused by \(b^\dagger_f\) on \(c(i,1)\)
and \(b^\dagger_g\) on \(c(i,2)\), respectively. The faces \(f\) and \(g\) are either neighbors or equal. There are only four ways to remove all terms containing
\( \left( b^\dagger_e \right)^2 \):
\begin{itemize}
 \item Remove \(f\) from \(F(c(i,1))\),
 \item add the face adjacent to \(f\) to \(F(c(i,1))\),
 \item remove \(g\) from \(F(c(i,2))\),
 \item add the face adjacent to \(g\) to \(F(c(i,2))\).
\end{itemize}
This can be seen from Eq. \eqref{eq:fg} and the fact that the individual summands of \eqref{eq:nocommon} are linearly independent.
Hence, in order to obtain a ground state, the double occupancy has to be removed by a combination of the four procedures mentioned above. The terms to do this have to be contained
in \( \left| \Phi \right\rangle \). This contracts or widens the cycles and \eqref{eq:knack} for \(e\) holds again for each summand. The argument is analogous for a higher than
double occupancy. Then several \(F(c)\) have to be changed.
Since the above statements hold for each edge of \(G\), \eqref{eq:knack} has to be fulfilled, if \( \left| \Phi \right\rangle \)
is a ground state. We remark that the representation of \eqref{eq:nocommon} is not necessarily unique.
If \eqref{eq:knack} holds, the \( c(i,j) \) are edge-disjoint for fixed \(i\). Thus, the set \(\{ c(i,j), \ j = 1, \ldots , N \}\) is indeed a cycle set, if 
\( \left| \Phi \right\rangle \) is a ground state. This finally proves proposition \ref{prop:noedge}.
\end{proof}

We have now established that each \(N\)-particle ground state may be written as a linear combination of states associated to cycle sets. We may now proceed with the second step of
our proof. Let us consider these states associated to a cycle set \(C\).
\begin{equation}
\label{eq:cycle_set}
 \left| \Phi(C) \right\rangle = \prod_{k=1}^N \sum_{f \in F(c_k)} b_f^\dagger \left| 0 \right\rangle.
\end{equation}
They may be further decomposed which leads to the following proposition.
\begin{prop}
 \label{prop:int}
Each state of the form \eqref{eq:cycle_set} may be written as a sum of states associated to cycle sets with pairwise non-intersecting cycles.
\begin{equation}
 \left| \Phi(C) \right\rangle = \sum_i \left| \Phi(D^{(n)}_i) \right\rangle = \sum_i \prod_{j=1}^N \sum_{f \in F(d_{ij})} b_f^\dagger \left| 0 \right\rangle.
\end{equation}
The \(D^{(n)}_i\) are cycle sets and \(d_{ij}\) is the \(j\)-th cycle of the \(i\)-th cycle set. The cycles in \(D^{(n)}_i\) are non-intersecting 
(\( d_{ij} \circledcirc d_{ik}\) for \( j \neq k \)). Furthermore, the states \( \left| \Phi(D^{(n)}_i) \right\rangle \) only have occupied edges
where \(\left| \Phi(C) \right\rangle\) has occupations.
\end{prop}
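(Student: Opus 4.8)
The plan is to reduce the claim to a single local move that resolves one intersecting pair of cycles inside $C$ and then to iterate that move until no intersecting pair remains. So I would first treat two cycles $c_1,c_2\in C$ with $c_1$ and $c_2$ intersecting, set $A=F(c_1)$, $B=F(c_2)$, and introduce the shorthand $O^\dagger_S=\sum_{f\in S}b^\dagger_f$ for an arbitrary set of faces $S$, so that the cycle operator of $c_i$ is $O^\dagger_{F(c_i)}$. Writing $A=P\cup Q$ and $B=P\cup R$ with the pairwise disjoint sets $P=A\cap B$, $Q=A\setminus B$, $R=B\setminus A$, additivity of $O^\dagger$ over disjoint face sets together with the fact that the $b^\dagger_f$ commute gives, after a one-line expansion, the key identity
\begin{equation}
O^\dagger_{A}\,O^\dagger_{B}=O^\dagger_{A\cup B}\,O^\dagger_{A\cap B}+O^\dagger_{A\setminus B}\,O^\dagger_{B\setminus A}.
\end{equation}
This is the engine of the proposition: it rewrites the product belonging to two intersecting cycles as a sum of two products, each again of two single-cycle operators.

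Second, I would verify the three bookkeeping properties of the two new pairs. For the \emph{support} claim, every edge carrying a nonzero amplitude of $O^\dagger_S$ lies on the boundary of $S$, and the boundary of each of $A\cup B$, $A\cap B$, $A\setminus B$, $B\setminus A$ is contained in $\partial A\cup\partial B=c_1\cup c_2$; hence the right-hand side occupies only edges already occupied by $\left|\Phi(C)\right\rangle$, as required. The pair $(A\cup B,\,A\cap B)$ is nested, $A\cap B\subseteq A\cup B$, so one cycle is a contraction of the other and the two are non-intersecting. For the pair $(A\setminus B,\,B\setminus A)$ I would argue that no $f\in A\setminus B$ and $g\in B\setminus A$ can share an edge: such an edge would separate inside from outside of \emph{both} $c_1$ and $c_2$, forcing it onto $c_1\cap c_2$ and contradicting edge-disjointness; thus $f\cap g=\emptyset$ throughout and this pair is non-intersecting as well. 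The same separation argument also shows the new cycles are edge-disjoint.

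Third is termination, which I would handle with a well-founded measure rather than by directly bounding the number of crossings. An intersecting edge-disjoint pair necessarily has $P=A\cap B\neq\emptyset$, and since neither cycle contracts the other, $Q\neq\emptyset$ and $R\neq\emptyset$. A short count then gives $|A\cup B|+|A\cap B|=|A|+|B|$ while $|A\setminus B|+|B\setminus A|=|A|+|B|-2|P|$. Hence the first term leaves $\sum_i|F(c_i)|$ unchanged and strictly increases $\sum_i|F(c_i)|^2$ by $2|Q||R|\geq 2$, whereas the second term strictly decreases $\sum_i|F(c_i)|$. Ordering cycle sets by the pair $\bigl(\sum_i|F(c_i)|,\,-\sum_i|F(c_i)|^2\bigr)$ lexicographically — a measure bounded below since both entries are bounded and integer-valued — each branch of the identity strictly lowers the measure; the remaining cycles of $C$ are untouched by a move, so their contributions cancel in the comparison even when a move creates new intersections with them. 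Well-foundedness then guarantees that finitely many applications terminate in a sum of states $\left|\Phi(D^{(n)}_i)\right\rangle$ over cycle sets of pairwise non-intersecting cycles.

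The step I expect to be the real obstacle is geometric, not algebraic: for each factor $O^\dagger_S$ on the right to be a genuine single-cycle operator, each Boolean combination must be a disjoint union of \emph{simply connected} face regions. For $A\cap B$, $A\setminus B$ and $B\setminus A$ this follows from $A$ and $B$ being simply connected together with the planarity of $G$, since a hole in a component would have to be filled inside both $A$ and $B$, a contradiction; a merely disconnected region is split into its components, each a cycle. The delicate case is $A\cup B$, whose boundary can in principle split into an outer cycle plus an inner one enclosing a gap covered by neither region, which happens precisely when the two boundaries cross more than twice. I would therefore spend the main effort showing, using 2-connectedness and the fact that intersecting edge-disjoint cycles meet only at vertices, that for the pairs produced by the move such a gap cannot arise, or else is itself bounded by a cycle built from edges of $c_1\cup c_2$ and is absorbed by one further application of the identity, so that every term on the right is indeed a state associated with a cycle set of pairwise non-intersecting cycles.
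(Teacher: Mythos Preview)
Your core engine, the identity
\[
O^\dagger_{A}\,O^\dagger_{B}=O^\dagger_{A\cup B}\,O^\dagger_{A\cap B}+O^\dagger_{A\setminus B}\,O^\dagger_{B\setminus A},
\]
is precisely what the paper uses: in the paper's notation $F_1=A\setminus B$, $F_2=B\setminus A$, $F_{12}=A\cap B$, and the displayed rewriting of $|\Phi(C_2)\rangle$ is exactly your identity. So at the algebraic level your proposal and the paper coincide.

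Where the paper proceeds differently is on the two points you single out as delicate. First, rather than working with the Boolean regions $A\cap B$, $A\setminus B$, $B\setminus A$, $A\cup B$ as wholes, the paper immediately passes to their connected components, naming the simply connected pieces of $A\cap B$ as $\beta_k$ and those of $(A\setminus B)\cup(B\setminus A)$ as $\alpha_i$, and then expands the two-term identity into an explicit sum over pairs built from these components. This sidesteps the issue you anticipate with $A\cup B$ possibly failing to be simply connected: instead of ever forming the operator $O^\dagger_{A\cup B}$, the paper pairs each $\beta_k$ with the cycle bounding $\beta_k$ together with the $\alpha_i$ adjacent to it, which is automatically a single cycle. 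Your worry about holes is therefore legitimate, and your proposed fix (``absorbed by one further application of the identity'') is not quite right as stated, since the identity acts on a \emph{product} of two cycle operators, not on a single operator $O^\dagger_{A\cup B}$; what one actually needs there is the additivity $O^\dagger_{A\cup B}=O^\dagger_{\text{outer disk}}-\sum O^\dagger_{H}$, and then one must check that the resulting pairs are still cycle sets and that the termination measure still decreases. In particular the outer disk can enclose strictly more faces than $|A|+|B|$, which breaks the first coordinate of your lexicographic measure.

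Second, for termination the paper does not use a potential function at all but inducts on $N$: assuming the first $N-1$ cycles have been made pairwise non-intersecting, it brings in the $N$-th cycle $c$ and applies the two-cycle decomposition to $c$ against an outermost $d_{i,N-1}$. The crucial observation is that one of the two resulting cycles (the one playing the role of your $A\cap B$ component $\beta_k$, or an $\alpha_i$ inside $d_{i,N-1}$) is non-intersecting with \emph{all} of the remaining $d_{ij}$ by the nesting structure already established, so the number of cycles that the ``active'' cycle can still intersect drops by one at each step. This replaces your measure argument by a clean count and avoids the hole-induced blow-up entirely.

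In short: your plan is the paper's plan. The gap is that your termination and your handling of non-simply-connected $A\cup B$ are coupled, and the measure you propose does not survive the hole-splitting you would need. The paper's remedy---decompose into the simply connected pieces $\alpha_i,\beta_k$ immediately, then induct on $N$ tracking how many of the already non-intersecting cycles the new one still meets---is the cleanest way to close it.
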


\begin{proof}[Proof of proposition \ref{prop:int}]
Let us first consider the case of a cycle set \(C_2\) containing only two cycles \(c_1\) and \(c_2\).
\begin{equation}
 \left| \Phi(C_2) \right\rangle = \sum_{f \in F(c_1)} b^\dagger_f \sum_{f \in F(c_2)} b^\dagger_f  \left| 0 \right\rangle.
\end{equation}
We now assume that \(c_1\) and \(c_2\) are intersecting. The cycles divide the set \(F(c_1) \cup F(c_2)\) into disjoint subsets. Some of these subsets are enclosed by one of the
cycles, others by both. The union of faces in these subsets represents a simply connected part of the plane. Therefore their boundary is a cycle on \(G\). We call these boundary cycles
\( \alpha_i \) (enclosed by one cycle) and \( \beta_k \) (enclosed by both cycles). We remark that \( \sum_{f \in F(\alpha_i)} b^\dagger_f \left| 0 \right\rangle \) and 
\( \sum_{f \in F(\beta_i)} b^\dagger_f \left| 0 \right\rangle \) only have edges occupied which have been occupied in \(\left| \Phi(C_2) \right\rangle\).
In addition, \(\alpha_i\) and \(\alpha_j\) do not have edges in common for \( i \neq j \). The same holds for \(\beta_k\) and \(\beta_l\) for \( k \neq l \). We define
\(F_1 = \bigcup_{F(\alpha_i) \subset F(c_1)} F(\alpha_i) \), \(F_2 = \bigcup_{F(\alpha_i) \subset F(c_2)} F(\alpha_i) \) and \(F_{12} = \bigcup_k F(\beta_k) \). Then
\(\left| \Phi(C_2) \right\rangle\) may be written as
\begin{equation}
\label{eq:two_cycles}
 \left| \Phi(C_2) \right\rangle = \sum_{f \in F_1 \cup F_{12}} b^\dagger_f  \sum_{f \in F_2 \cup F_{12}} b^\dagger_f \left| 0 \right\rangle,
\end{equation}
with \( F_1 \cap F_2 = F_1 \cap F_{12} = F_2 \cap F_{12} = \emptyset \).

By expanding and rearranging the summands, \eqref{eq:two_cycles} reads
\begin{equation}
 \left| \Phi(C_2) \right\rangle = \left( \sum_{f \in F_1 \cup F_2 \cup F_{12}} b^\dagger_f \sum_{f \in F_{12}} b^\dagger_f
 + \sum_{f \in F_1} b^\dagger_f \sum_{f \in F_2 } b^\dagger_f \right) \left| 0 \right\rangle.
\end{equation}
This may be decomposed to
\begin{eqnarray}
\label{eq:intersect}
 \fl \left| \Phi(C_2) \right\rangle = &\left[ \sum_k \left( \sum_{f \in F(\beta_k)} b^\dagger_f \right) \left(  \sum_{l \neq k} \sum_{f \in F(\beta_l)} b^\dagger_f 
 + \sum_{j \notin I_k} \sum_{f \in F(\alpha_j)}  b^\dagger_f   \right. \right. \nonumber \\
 &\left. \left. + \sum_{f \in F(\beta_k) \cup \bigcup_{i \in I_k} F(\alpha_i)} b^\dagger_f \right)
   + \left( \sum_{F(\alpha_i) \subset F(c_1)} \sum_{f \in F(\alpha_i)} b^\dagger_f \right)
  \right. \nonumber \\
 &\left. \left( \sum_{F(\alpha_i) \subset F(c_2)} \sum_{f \in F(\alpha_i)} b^\dagger_f \right) \right] \left| 0 \right\rangle,
\end{eqnarray}
where \(I_k = \{i \ | \ \alpha_i \text{ has edges in common with } \beta_k \}\). Let us examine the two individual parts of this expression. The first term
describes the different combinations of a particle on a cycle \(\beta_k\) plus one on either
\begin{itemize}
 \item a cycle \( \beta_l \), or
 \item a cycle \(\alpha_j\) that has no edge in common with \(\beta_k\), or
 \item a cycle consisting of the edges that belong only the \(\alpha_i\) that have edges in common with \(\beta_k\).
\end{itemize}
The second term describes the
combinations of two particles on two cycles \(\alpha_i\). All of these combinations consist of two non-intersecting cycles. This shows that proposition \ref{prop:int} holds
for \(N=2\).

Figs.~\ref{fig:non_int} and \ref{fig:proof} illustrate an example of two intersecting cycles and their decomposition into non-intersecting cycles.

\begin{figure}[h]
\centering
 \includegraphics[width=8.5cm]{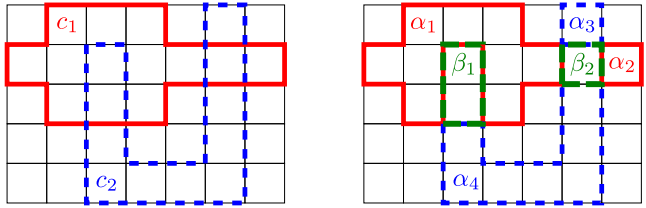}
\caption[Two non-intersecting cycles.]{Division of the plane by two intersecting cycles \(c_1\) and \(c_2\) and the definition of the cycles \(\alpha_i\) and \(\beta_k\).}
\label{fig:non_int}
\end{figure}

\begin{figure}[h]
\centering
 \includegraphics[width=8.5cm]{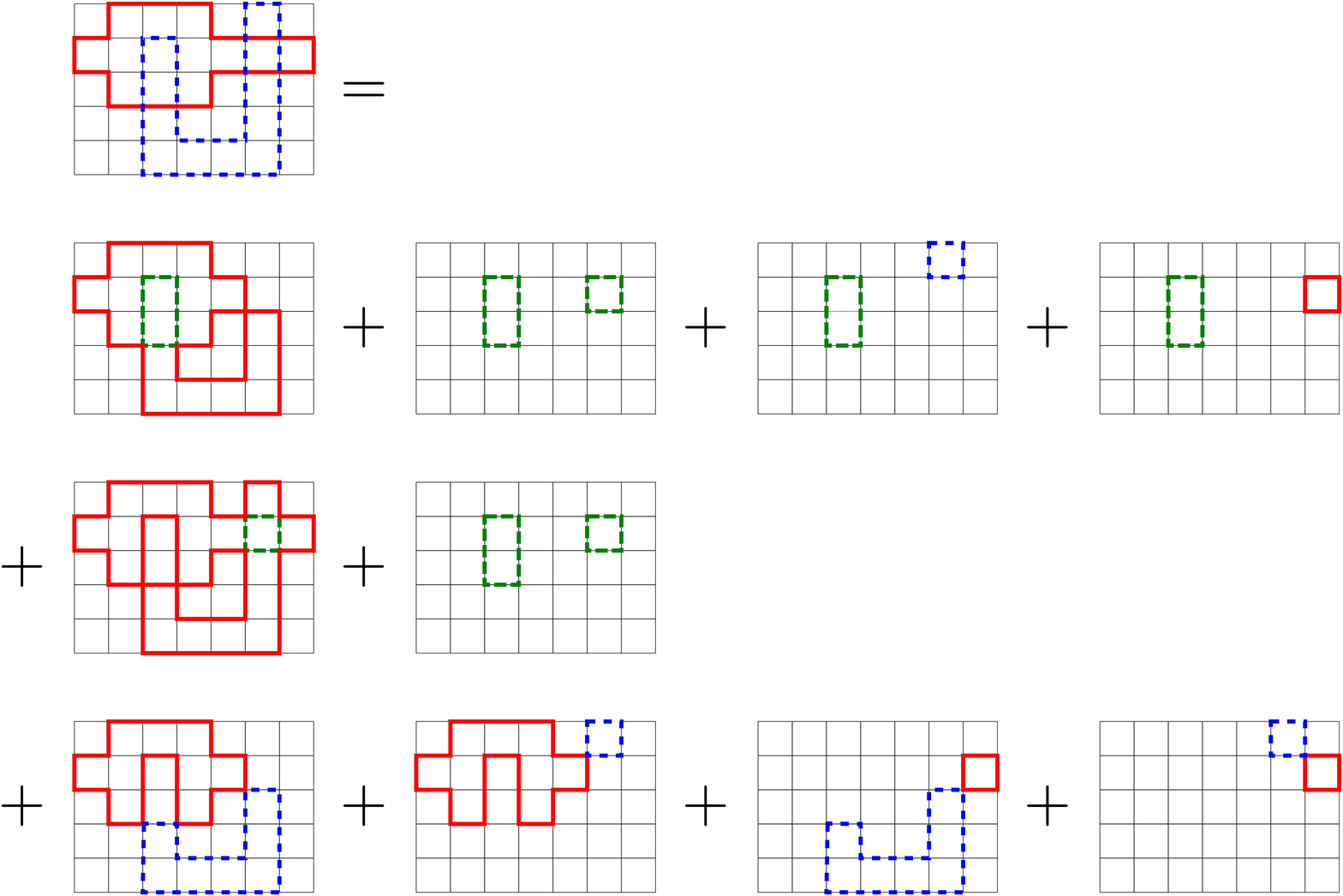}
\caption[Decomposition of intersecting cycles.]{Decomposition of the state associated to the cycle set in Fig.~\ref{fig:non_int} consisting of two intersecting cycles into a sum
of states associated to cycle sets with pairwise non-intersecting cycles. The combinations in the second and third line correspond to the first part of \eqref{eq:intersect}.
The fourth line corresponds to the second part of \eqref{eq:intersect}.}
\label{fig:proof}
\end{figure}

Let us now consider the general case of \(N\) particles. We may use the above result for \(N=2\) to prove the proposition by induction on the number of particles. We assume
that the statement holds for \(N-1\) cycles. A state associated to a cycle set may then be written as
\begin{equation}
 \left| \Phi(C) \right\rangle = \left( \sum_i \prod_{j=1}^{N-1} \sum_{f \in F(d_{ij})} b^\dagger_f \right) \sum_{f \in F(c)} b^\dagger_f  \left| 0 \right\rangle,
\end{equation}
with \(d_{ij} \circledcirc d_{ik} \) for \(j \neq k\). The \(N\)-th cycle is called \(c\). Let us examine an individual summand \(\left| \varphi_i(C) \right\rangle\) of
\(\left| \Phi(C) \right\rangle\).
\begin{equation}
 \left| \varphi_i(C) \right\rangle = \prod_{j=1}^{N-1} \sum_{f \in F(d_{ij})} b^\dagger_f \sum_{f \in F(c)} b^\dagger_f  \left| 0 \right\rangle.
\end{equation}
There is always a \(d_{ij}\) with \( F(d_{ik}) \not\subset F(d_{ij}) \) for all \(k \neq j\). Without loss of generality, let this \(d_{ij}\) be \(d_{iN-1}\). Since we know that
the statement holds for \(N=2\), we may construct \(\left| \varphi_i(C) \right\rangle\) as follows:
\begin{equation}
 \left| \varphi_i(C) \right\rangle =  \sum_m \left( \prod_{j=1}^{N-2} \sum_{f \in F(d_{ij})} b^\dagger_f %\right. \nonumber \\ 
%  && \times \left. 
\sum_{f \in F(e^2_{m2})} b^\dagger_f \sum_{f \in F(e^2_{m1})} b^\dagger_f \right) \left| 0 \right\rangle.
\end{equation}
The cycles \(e^2_{m1}\) and \(e^2_{m2}\) arise from the decomposition of  \(d_{iN-1}\) and \(c\) into non-intersecting cycles. We choose \(e^2_{m1} = \beta_k\), if the \(m\)-th
summand is a combination of cycles as in the first part of \eqref{eq:intersect} and \(e^2_{m1} = \alpha_i\) with \(F(\alpha_i) \subset F(d_{iN-1})\), if we have a combination
 as in the second part of \eqref{eq:intersect}. We then have \( e^2_{m1} \circledcirc e^2_{m2} \), \( e^2_{m1} \circledcirc d_{ij} \) for all \(j \in \{1, \ldots, N-2 \}\) and still
\( d_{ij} \circledcirc d_{ik} \) for \(j \neq k\). Hence, \( e^2_{m2} \) intersects at most \(N-2\) cycles. The construction also assures \( F(d_{ij}) \not\subset F(e^2_{m1}) \) for
all \(j\) and \( F(e^2_{m2}) \not\subset F(e^2_{m1}) \). This means that a cycle consisting of any edges of \( e^2_{m2} \) or \(d_{ij}\) cannot intersect \( e^2_{m1} \). If
we again examine an individual summand of \(\left| \varphi_i(C) \right\rangle\), we may therefore disregard \( e^2_{m1} \) and follow the same procedure as above with the \(N-2\)
remaining \(d_{ij}\) and \( e^2_{m2} \) in the role of \(c\). The result is a sum of states consisting of \(N-1\) pairwise non-intersecting cycles \( d_{ij},\ j\in \{1,\ldots, N-3 \}\),
\( e^3_{n2} \) and \( e^3_{n1} = e^2_{m1} \) and a cycle \( e^3_{n3} \) that intersects at most \( N-3 \) of the other cycles. If we iterate the process, after \(N-1\) steps we obtain
\begin{equation}
 \left| \Phi(C) \right\rangle = \sum_i \prod_{j=1}^N \sum_{f \in F(e^N_{ij})} b_f^\dagger \left| 0 \right\rangle,
\end{equation}
with \( e^N_{ij} \circledcirc e^N_{ik} \) for \(j \neq k\). This finally proves proposition \ref{prop:int}.
\end{proof}

Let us now proceed with the third step of our proof. We have to show that a state \( \left| \Phi (D^{(n)}) \right\rangle \) associated to a cycle set with pairwise non-intersecting
cycles may be written as a sum of states of \( \mathcal{B}_N \). We now give a procedure to construct this sum. Let us first recall
\begin{equation}
\label{eq:third_step}
 \left| \Phi (D^{(n)}) \right\rangle = \prod_{j=1}^N \sum_{f \in F(d_j)} b_f^\dagger \left| 0 \right\rangle.
\end{equation}
The product in \eqref{eq:third_step} shall be ordered in such a way that \( F(d_k) \not\subset F(d_j) \) for all \( j < k\). This is always possible. For any \( j < k \), we obtain either
\( F(d_j) \subset F(d_k) \) or \( f \cap g = \emptyset\), if \( f \in F(d_j) \) and \(g \in F(d_k) \) as a consequence of this ordering. If we expand the product in \eqref{eq:third_step},
we obtain all possible combinations of \( \prod_{j=1}^N b_{f_j}^\dagger \left| 0 \right\rangle \) with \( f_j \in F(d_j) \). Let us now construct a basis state
\( \left| \Phi (C_N^{(u)}) \right\rangle \in \mathcal{B}_N \) that contains such a combination.
\begin{equation}
 \left| \Phi (C_N^{(u)}) \right\rangle = \prod_{j=1}^N \sum_{f \in F(c_j)} b_f^\dagger \left| 0 \right\rangle.
\end{equation}
We choose \(c_1\) such that \(F(c_1)= \{f_1\}\). If \(f_1 \cap f_2 = \emptyset\), we set \( F(c_2)  = \left\{ f_2 \right\} \). If not, we choose
\( F(c_2)  = \{ f_1 \} \cup \{ f \ | \ f  \text{ adjacent to } f_1 \} \). Since \(F(d_1) \subset F(d_2) \) in this case, these faces are all
contained in \( F(d_2) \). This procedure is easily continued up to \( f_N \). If \( f_j \cap g = \emptyset \) for all \(g \in \bigcup_{k=1}^{j-1} F(c_k) \), we
choose \( F(c_j) \) to be \( \{f_j\} \). Otherwise, \( c_j \) has to be a cycle around possibly several \( c_k \),
\( k<j \). However, the ordering of the \( d_j \) in the first place assures that the faces which then have to be enclosed by \( c_j \) are as well contained in
\( F(d_j) \). In this way, all combinations of \( b^\dagger_{f_j} \) can be constructed and the state \( \left| \Phi (D^{(n)}) \right\rangle \) may be written as a sum of states
\( \left| \Phi (C_N^{(u)}) \right\rangle \in \mathcal{B}_N \).
This finally proves the theorem.

\section{Conclusions and Outlook}

We have given a basis for the many-particle ground states of the positive hopping bosonic Hubbard model on line graphs of finite 2-connected planar bipartite graphs at sufficiently
low filling factors. Each element of that basis consists of localized non-overlapping 
single-particle ground states which may be placed next to or wrapped around each other. A boson in these states is localized on a cycle of the line graph.
The construction works up to a critical filling factor at which the localized states are close-packed.
The different arrangements of the localized states give rise to a large ground state degeneracy especially at fillings below the close-packed case. 

It would be interesting to know
the exact degeneracy of the ground states. This would allow to determine the zero-temperature entropy per lattice site of the system 
which is finite for fillings below the critical filling factor. However, the computation of the
degeneracy has proven to be a difficult task and we are not able to answer this question here. 
For a planar periodic lattice, the states formed by placing localized states on the lattice so
that they do not overlap may be counted rather easily. Let \emph{e.g.} \(G\) be the honeycomb lattice.
The problem of placing localized states on the honeycomb lattice so
that they do not overlap can be mapped onto the exactly solvable hard-hexagon model 
\cite{PhysRevB.70.104415,PhysRevB.70.100403,PTPS.160.361}. 
The crucial problem is the counting of the remaining states in which cycles are enclosing each other. 
Therefore, the entropy of the exactly solvable hard-hexagon model is a lower bound to the entropy of the
bosonic Hubbard model on the kagome lattice. For general planar bipartite \(G\) the problem of counting
all sets of non-overlapping elementary cycles is difficult, so that for the general case already this
lower bound cannot easily be calculated.

Another intriguing problem is the generalization of our model to three dimensions. In this case, the construction of the single-particle basis we used in this work is not possible.
The reason is that we make use of Euler's Theorem which guarantees a one-to-one correspondence between the faces of the original graph \(G\) and the single particle ground states on
the line graph \(L(G)\). For non-planar \(G\) we loose this correspondence.

However, it is still possible to construct a basis for the single particle ground states using states localized on
elementary cycles. The dimension of the space of single-particle ground states is still \( D = |E(G)|-|V(G)|+1 \) \cite{0305-4470-24-2-005}. Hence, if one finds any \(D\) linearly independent
states localized on cycles of \( G \), they will form a basis. This can be easily done for \emph{e.g.} the cubic lattice. It is likely that all many-particle ground states can still be written
as linear combinations of states associated to cycle sets which has to be proved. Moreover, it should be possible to give a more general definition of contraction and uncontractibility of
cycle sets, if one allows deformations of cycles only by adding/subtracting cycles that correspond to the single-particle basis states.

At the critical filling factor (\emph{e.g.} 1/9 for the kagome lattice) the ground states of the system are close-packed. \emph{i.e.} Wigner crystals. The system becomes a Mott insulator.
A third open question is what happens when a small perturbation is added to the system. It is clear that a small perturbation to the single particle Hamiltonian which lifts the degeneracy
of the lowest band will lift ground states degeneracy of the multi-particle ground states as well. If one keeps translational invariance (no disorder), the single particle ground states will
be Bloch states which can be build by linear combinations of the localized states we used. The question then is whether there is still a Mott transition in this system.

\section*{References}

\providecommand{\newblock}{}

% \bibliography{paper}

\begin{thebibliography}{10}

% Bibliography created with iopart-num v2.1
% /biblio/bibtex/contrib/iopart-num

\bibitem{0305-4470-24-14-018}
Mielke A 1991 {\em J. Phys. A: Math. Gen.\/} {\bf 24} 3311--3321

\bibitem{0305-4470-39-34-006}
Schmidt H~J, Richter J and Moessner R 2006 {\em J. Phys. A: Math. Gen.\/} {\bf
  39} 10673 

\bibitem{PhysRevB.79.054403}
Derzhko O, Honecker A and Richter J 2009 {\em Phys. Rev. B\/} {\bf 79} 054403

\bibitem{PhysRevB.70.104415}
Derzhko O and Richter J 2004 {\em Phys. Rev. B\/} {\bf 70} 104415

\bibitem{PhysRevLett.88.167207}
Schulenburg J, Honecker A, Schnack J, Richter J and Schmidt H~J 2002 {\em Phys.
  Rev. Lett.\/} {\bf 88} 167207

\bibitem{PhysRevB.70.100403}
Zhitomirsky M~E and Tsunetsugu H 2004 {\em Phys. Rev. B\/} {\bf 70} 100403

\bibitem{PTPS.160.361}
Zhitomirsky M~E and Tsunetsugu H 2005 {\em Prog. Theor. Phys. Suppl.\/} {\bf
  160} 361--382

\bibitem{10.1063/1.2780166}
Derzhko O, Richter J, Honecker A and Schmidt H~J 2007 {\em Low Temp. Phys.\/}
  {\bf 33} 745--756 

\bibitem{PhysRev.58.1098}
Holstein T and Primakoff H 1940 {\em Phys. Rev.\/} {\bf 58} 1098--1113

\bibitem{PhysRevB.49.3975}
Awaga K, Okuno T, Yamaguchi A, Hasegawa M, Inabe T, Maruyama Y and Wada N 1994
  {\em Phys. Rev. B\/} {\bf 49} 3975--3981

\bibitem{PhysRevLett.81.3108}
Jaksch D, Bruder C, Cirac J~I, Gardiner C~W and Zoller P 1998 {\em Phys. Rev.
  Lett.\/} {\bf 81} 3108--3111

\bibitem{nature415039a}
Greiner M, Mandel O, Esslinger T, H{\"a}nsch T~W and Bloch I 2002 {\em
  Nature\/} {\bf 415} 39--44

\bibitem{PhysRevA.72.053612}
Damski B, Fehrmann H, Everts H~U, Baranov M, Santos L and Lewenstein M 2005
  {\em Phys. Rev. A\/} {\bf 72} 053612

\bibitem{0295-5075-89-1-10010}
Eckardt A, Hauke P, Soltan-Panahi P, Becker C, Sengstock K and Lewenstein M
  2010 {\em Europhys. Lett.\/} {\bf 89} 10010

\bibitem{0305-4470-24-2-005}
Mielke A 1991 {\em J. Phys. A: Math. Gen.\/} {\bf 24} L73--L77

\bibitem{0305-4470-25-16-011}
Mielke A 1992 {\em J. Phys. A: Math. Gen.\/} {\bf 25} 4335--4345

\bibitem{bollobas1998}
Bollob{\'a}s B 1998 {\em Modern Graph Theory\/} ({\em Graduate Texts in
  Mathematics\/} no 184) (Berlin, Heidelberg, New York: Springer)

\end{thebibliography}
\bibliographystyle{unsrt.bst}

\end{document}